\newcommand{\defemph}[1]{\textbf{\textsl{#1}}}
\newcommand{\uu}{\mathbf{u}}
\newcommand{\vv}{\mathbf{v}}
\begin{document}

\bibliographystyle{plain}

\title{On the Positivity Problem for\\ Simple Linear Recurrence
  Sequences\thanks{This research was partially supported by
    EPSRC\@. We are also grateful to Matt Daws for considerable
    assistance in the initial stages of this work.}}

\author{Jo\"el Ouaknine  
\and James Worrell}

\institute{Department of Computer Science,
Oxford University, UK}

\maketitle

\begin{abstract}
Given a linear recurrence sequence (LRS) over the integers, the
\emph{Positivity Problem} asks whether all terms of the sequence are
positive. We show that, for simple LRS (those whose characteristic
polynomial has no repeated roots) of order~$9$ or less, Positivity is
decidable, with complexity in the Counting Hierarchy.
\end{abstract}

\section{Introduction}

A (real) \defemph{linear recurrence sequence (LRS)} is an infinite
sequence $\uu = \langle u_0, u_1, u_2, \ldots \rangle$ of real numbers
having the following property: there exist constants $b_1, b_2,
\ldots, b_k$ (with $b_k \neq 0$) such that, for all $n \geq 0$,
\begin{equation}
\label{rec-rel}
u_{n+k} = b_1 u_{n+k-1} + b_2 u_{n+k-2} + \ldots + b_k u_{n} \, .
\end{equation} 
If the initial values $u_0, \ldots, u_{k-1}$ of the sequence are
provided, the recurrence relation defines the rest of the sequence
uniquely. Such a sequence is said to have \defemph{order}
$k$.\footnote{Some authors define the order of an LRS as the
  \emph{least} $k$ such that the LRS obeys such a recurrence
  relation. The definition we have chosen allows for a simpler
  presentation of our results and is algorithmically more convenient.}

The best-known example of an LRS was given by Leo\-nar\-do of Pisa in
the 12th century: the Fibonacci sequence $\langle 0, 1, 1, 2,$ $3, 5,
8, 13, \ldots \rangle$, which satisfies the recurrence relation
$u_{n+2} = u_{n+1} + u_n$. Leonardo of Pisa introduced this sequence
as a means to model the growth of an idealised population of
rabbits. Not only has the Fibonacci sequence been extensively studied
since, but LRS now form a vast subject in their own right, with
numerous applications in mathematics and other sciences. A deep and
extensive treatise on the mathematical aspects of recurrence sequences
is the monograph of Everest \emph{et al.}~\cite{BOOK}.

Given an LRS $\uu$ satisfying the recurrence relation~(\ref{rec-rel}),
the \defemph{characteristic polynomial} of $\uu$ is
\begin{equation}
\label{char-poly}
p(x) = x^k - b_1 x^{k-1} - \ldots - b_{k-1} x - b_k \, .
\end{equation}
An LRS is said to be \defemph{simple} if its characteristic polynomial
has no repeated roots. Simple LRS, such as the Fibonacci sequence,
possess a number of desirable properties which considerably simplify
their analysis---see, e.g., \cite{BOOK,ESS02,AV09,AV11,OW14a}. They
constitute a large\footnote{In the measure-theoretic sense, almost all
  LRS are \emph{simple} LRS\@.} and well-studied class of sequences,
and correspond to \emph{diagonalisable} matrices in the matricial
formulation of LRS---see Sec.~\ref{sec-LRS}.

In this paper, we focus on the \defemph{Positivity Problem} for simple
LRS over the integers (or equivalently, for our purposes, the
rationals): given a simple LRS, are all of its terms
positive?\footnote{In keeping with established terminology, `positive'
  here is taken to mean `non-negative'.}

As detailed in~\cite{OW14}, the Positivity Problem (and assorted
variants) has applications in a wide array of scientific areas,
including theoretical biology, economics, software verification,
probabilistic model checking, quantum computing, discrete linear
dynamical systems, combinatorics, formal languages, statistical
physics, generating functions, etc.
Positivity also bears an important relationship to the well-known
\emph{Skolem Problem}: does a given LRS have a zero? The decidability
of the Skolem Problem is generally considered to have been open since
the 1930s (notwithstanding the fact that algorithmic decision issues
had not at the time acquired the importance that they have
today---see~\cite{TUCS05} for a discussion on this subject; see
also~\cite[p.~258]{Tao08} and \cite{Lip09}, in which this state of
affairs---the enduring openness of decidability for the Skolem
Problem---is described as ``faintly outrageous'' by Tao and a
``mathematical embarrassment'' by Lipton). A breakthrough occurred in
the mid-1980s, when Mignotte \emph{et al.}~\cite{MST84} and
Vereshchagin~\cite{Ver85} independently showed decidability for
LRS of order $4$ or less. These deep results make essential
use of Baker's theorem on linear forms in logarithms (which earned
Baker the Fields Medal in 1970), as well as a $p$-adic analogue of
Baker's theorem due to van der Poorten. Unfortunately, little progress
on that front has since been recorded.\footnote{A proof of
  decidability of the Skolem Problem for LRS of order 5 was announced
  in~\cite{TUCS05}. However, as pointed out in~\cite{OW12}, the proof
  seems to have a serious gap.}

It is considered folklore that the decidability of Positivity (for
arbitrary LRS) would entail that of the Skolem Problem~\cite{OW14},
noting however that the reduction increases the order of LRS
quadratically.\footnote{It is worth noting that, under this reduction,
  the decidability of the Positivity Problem for simple LRS of order
  at most $14$ would entail the decidability of the Skolem Problem for
  simple LRS of order~$5$, which is open and from which the general
  case of the Skolem Problem at order~$5$ would follow, based on the
  work carried out in~\cite{TUCS05}; see also~\cite{OW12}, which
  identifies the last unresolved critical case for the Skolem Problem
  at order~$5$, involving simple LRS\@.}  Nevertheless, the earliest
explicit references in the literature to the Positivity Problem that
we have found are from the 1970s (see,
e.g.,~\cite{Soi76,Sal76,BM76}). In~\cite{Soi76}, the Skolem and
Positivity Problems are described as ``very difficult'', whereas
in~\cite{RS94}, the authors assert that the Skolem and Positivity
Problems are ``generally conjectured [to be] decidable''. Positivity
is again stated as an open problem
in~\cite{HHH06,BG07,LT09,Liu10,TV11,OW14}, among others.

Unsurprisingly, progress on the Positivity Problem over the last few
decades has been fairly slow. In the early 1980s, Burke and
Webb~\cite{BW81} showed that the closely related problem of
\emph{Ultimate Positivity} (are all but finitely many terms of a given
LRS positive?)\ is decidable for LRS of order~$2$, and nine years
later Nagasaka and Shiue~\cite{NS90} showed the same for LRS of
order~$3$ that have repeated characteristic roots. Much more recently,
Halava \emph{et al.}~\cite{HHH06} showed that Positivity is decidable
for LRS of order~$2$, and subsequently Laohakosol and
Tangsupphathawat~\cite{LT09} proved that Positivity is decidable for
LRS of order~$3$.  In 2012, an article purporting to show decidability
of Positivity for LRS of order~$4$ was published~\cite{TPL12}, with
the authors noting they were unable to tackle the case of
order~$5$. Unfortunately, as pointed out in~\cite{OW14} and
acknowledged by the authors themselves~\cite{Lao13}, that paper
contains a major error, invalidating the order-$4$ claim. Very
recently, Positivity was nevertheless shown decidable for arbitrary
integer LRS of order~$5$ or less~\cite{OW14}, with complexity in the
Counting Hierarchy; moreover, the same paper shows by way of hardness
that the decidability of Positivity for integer LRS of order~$6$ would
entail major breakthroughs in analytic number theory (certain
longstanding Diophantine-approximation open problems would become
solvable). Finally, in~\cite{OW14a}, the authors show that Ultimate Positivity
for simple integer LRS of unrestricted order is decidable within
PSPACE, and in polynomial time if the order is fixed.

\

\noindent
\textbf{Main Result.}
The main result of this paper is that the Positivity Problem for
simple integer LRS of order~$9$ or less is decidable. An analysis of
the decision procedure shows that its complexity lies in
$\mathrm{coNP}^{\mathrm{PP}^{\mathrm{PP}^{\mathrm{PP}}}}$, i.e.,
within the fourth level of the Counting Hierarchy (itself contained in
PSPACE)\@.\footnote{The complexity is as a function of the bit length
  of the standard representation of integer LRS; for an LRS of order
  $k$ as defined by Eq.~(\ref{rec-rel}), this representation
  consists of the $2k$-tuple $(b_1, \ldots, b_k, u_0, \ldots,
  u_{k-1})$ of integers.

} 

\

\noindent
\textbf{Comparison with Related Work.}
It is important to note the fundamental difference between the above
result and those of~\cite{OW14a}: in the latter, Ultimate Positivity
is shown to be decidable for simple LRS of all orders, but in a
\emph{non-constructive} sense: a given LRS may be certified ultimately
positive, yet no index threshold is provided beyond which all terms of
the LRS are positive. At the time of writing, this appears to be a
fundamental difficulty: for simple LRS of any given order, the ability
to compute such index thresholds would immediately enable one to
decide Positivity. Yet as noted earlier, the decidability of
Positivity for simple LRS of order at most $14$ would in turn entail
the decidability of the Skolem Problem for arbitrary LRS of order $5$,
a longstanding and major open problem.

We recall and summarise some standard material on linear recurrence
sequences in Sec.~\ref{sec-LRS}. We also recall in App.~\ref{tools}
the statements of various mathematical tools needed in our
development, notably Baker's theorem on linear forms in logarithms,
Masser's results on multiplicative relationships among algebraic
numbers, Kronecker's theorem on simultaneous Diophantine
approximation, and Renegar's work on the fine-grained complexity of
quantifier elimination in the first-order theory of the reals.

Our overall approach is similar to that followed in~\cite{OW14},
attacking the problem via the exponential polynomial solution of LRS
using sophisticated tools from analytic and algebraic number theory,
Diophantine geometry and approximation, and real algebraic
geometry. However the present paper makes vastly greater and deeper
use of real algebraic geometry, particularly in the form of
Lemmas~\ref{lem-one-constraint}, \ref{lem-m-1-constraints}, and
\ref{zero-dim-lemma} (which serve to establish the key fact that
certain varieties are zero-dimensional, enabling our application of
Baker's theorem in higher dimensions), and throughout the whole of
Sec.~\ref{7-dom-roots}, which handles what is by far the most
difficult and complex critical case in our analysis. 
Lemmas~\ref{lem-one-constraint}--\ref{zero-dim-lemma} (which
can be found in App.~\ref{app-lemmas}), as well as the development
of Sec.~\ref{7-dom-roots}, are entirely new.

The present paper also markedly differs from~\cite{OW14a}. In fact,
aside from sharing standard material on LRS, the non-constructive
approach of~\cite{OW14a} eschews most of the real algebraic geometry
of the present paper, as well as Baker's theorem, and is
underpinned instead by non-constructive lower bounds on sums of
$S$-units, which in turn follow from deep results in Diophantine
approximation (Schlickewei's $p$-adic generalisation of Schmidt's
Subspace theorem).

\

\noindent
We present a high-level over\-view of our proof strategy---split in
two parts---within Sec.~\ref{decidability}, and also briefly
discuss why the present approach does not seem extendable beyond
order~$9$. As noted earlier, establishing the decidability of
Positivity for simple LRS of order~$14$ would entail a major advance,
namely the decidability of the Skolem Problem for arbitrary LRS of
order $5$. It is an open problem whether similar `hardness' results
can be established for simple LRS of orders~$10$--$13$.

In terms of complexity, it is shown in~\cite{OW14a} that the
Positivity Problem for simple integer LRS of arbitrary order is hard
for co$\exists \mathbb{R}$, the class of problems whose complements
are solvable in the existential theory of the reals, and which is
known to contain coNP\@. However, no lower bounds are known when the
order is fixed or bounded, as is the case in the present paper. Either
establishing non-trivial lower bounds or improving the
Counting-Hierarchy complexity of the present procedure also appear to
be challenging open problems.

\section{Linear Recurrence Sequences}
\label{sec-LRS}

We recall some fundamental properties of (simple) linear recurrence
sequences.  Results are stated without proof, and we refer the reader
to~\cite{BOOK,TUCS05} for details.

Let $\uu = \langle u_n \rangle_{n=0}^{\infty}$ be an LRS of order $k$
over the reals satisfying the recurrence relation $u_{n+k} = b_1
u_{n+k-1} + \ldots + b_k u_{n}$, where $b_k \neq 0$. We denote by
$||\uu||$ the bit length of its representation as a $2k$-tuple of
integers, as discussed in the previous section.  The
\defemph{characteristic roots} of $\uu$ are the roots of its
characteristic polynomial (cf.~Eq.~(\ref{char-poly})), and the
\defemph{dominant roots} are the roots of maximum modulus. The
characteristic roots can be computed in time polynomial in
$||\uu||$---see App.~\ref{tools} for further details on
algebraic-number manipulations.

The characteristic roots divide naturally into real and non-real ones.
Since the characteristic polynomial has real coefficients, non-real
roots always arise in conjugate pairs. Thus we may write $\{\rho_1,
\ldots, \rho_{\ell}, \gamma_1, \overline{\gamma_1}, \ldots, \gamma_m,
\overline{\gamma_m}\}$ to represent the set of characteristic roots of
$\uu$, where each $\rho_i \in \mathbb{R}$ and each $\gamma_j \in
\mathbb{C} \setminus \mathbb{R}$. If $\uu$ is a simple LRS, there are
algebraic constants $a_1, \ldots, a_{\ell} \in \mathbb{R}$ and $c_1,
\ldots, c_m$ such that, for all $n \geq 0$,
\begin{equation}
\label{star}
u_n = \sum_{i=1}^{\ell} a_i \rho_i^n + 
      \sum_{j=1}^m \left(c_j \gamma_j^n + 
                  \overline{c_j} \overline{\gamma_j}^n\right) \,.
\end{equation}

This expression is referred to as the \defemph{exponential polynomial}
solution of $\uu$. For fixed $k$, all constants $a_i$ and $c_j$ can be
computed in time polynomial in $||\uu||$, since they can be obtained
by solving a system of linear equations involving the first $k$ instances
of Eq.~(\ref{star}). 

An LRS is said to be \defemph{non-degenerate} if it does not have two
distinct characteristic roots whose quotient is a root of unity.  As
pointed out in~\cite{BOOK}, the study of arbitrary LRS can effectively
be reduced to that of non-degenerate LRS, by partitioning the original
LRS into finitely many subsequences, each of which is
non-degenerate. In general, such a reduction will require exponential
time. However, when restricting to LRS of bounded order (in our case,
of order at most $9$), the reduction can be carried out in polynomial
time. In particular, any LRS of order~$9$ or less can be partitioned
in polynomial time into at most $3.9 \cdot 10^7$ non-degenerate LRS of
the same order or less.\footnote{We obtained this value using a
  bespoke enumeration procedure for order~$9$. A bound of $e^{2
    \sqrt{6\cdot 9 \log 9}} \leq 2.9 \cdot 10^9$ can be obtained from
  Cor.~3.3 of~\cite{YLN95}.} Note that if the original LRS is
simple, this process will yield a collection of simple non-degenerate
subsequences. In the rest of this paper, we shall therefore assume
that all LRS we are given are non-degenerate.

Any LRS $\uu$ of order $k$ can alternately be given in matrix form, in
the sense that there is a square matrix $M$ of dimension $k \times k$,
together with $k$-dimensional column vectors $\vec{v}$ and $\vec{w}$,
such that, for all $n \geq 0$, $u_n = \vec{v}^T M^n \vec{w}$. It
suffices to take $M$ to be the transpose of the companion matrix of
the characteristic polynomial of $\uu$, let $\vec{v}$ be the vector
$(u_{k-1},\ldots,u_0)$ of initial terms of $\uu$ in reverse order, and
take $\vec{w}$ to be the vector whose first $k-1$ entries are $0$ and
whose $k$th entry is $1$. It is worth noting that the characteristic
roots of $\uu$ correspond precisely to the eigenvalues of $M$, and
that if $\uu$ is simple then $M$ is diagonalisable. This translation
is instrumental in Sec.~\ref{decidability} to place the Positivity
Problem for simple LRS of order at most $9$ within the Counting
Hierarchy.

Conversely, given any square matrix $M$ of dimension $k \times k$, and
any $k$-dimensional vectors $\vec{v}$ and $\vec{w}$, let $u_n =
\vec{v}^T M^n \vec{w}$. Then $\langle \vec{v}^T M^n \vec{w}
\rangle_{n=k}^{\infty}$ is an LRS of order at most $k$ whose
characteristic polynomial divides that of $M$, as can be seen by
applying the Cayley-Hamilton Theorem.\footnote{In fact, if none of the
  eigenvalues of $M$ are zero, it is easy to see that the full
  sequence $\langle \vec{v}^T M^n \vec{w} \rangle_{n=0}^{\infty}$ is
  an LRS (of order at most $k$).} When $M$ is diagonalisable, the
resulting LRS is simple.

\section{Decidability and Complexity}
\label{decidability}

Let $\uu = \langle u_n \rangle_{n=0}^{\infty}$ be an integer LRS of
order $k$. As discussed in the Introduction, we assume that $u$ is
presented as a $2k$-tuple of integers $(b_1, \ldots, b_k, u_0, \ldots,
u_{k-1}) \in \mathbb{Z}^{2k}$, such that for all $n \geq 0$,
\begin{equation}
\label{recurrence}
u_{n+k} = b_1 u_{n+k-1} + \ldots + b_k u_n \, .
\end{equation}

The \defemph{Positivity Problem} asks, given such an LRS $\uu$,
whether for all $n \geq 0$, it is the case that $u_n \geq 0$. When
this holds, we say that $\uu$ is \defemph{positive}.


In this section, we establish the following:

\begin{theorem}
\label{pos-theorem}
The Positivity Problem for simple integer LRS of order $9$ or less is
decidable in
$\mathrm{coNP}^{\mathrm{PP}^{\mathrm{PP}^{\mathrm{PP}}}}$. 
\end{theorem}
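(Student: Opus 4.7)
The plan is to analyze the exponential polynomial expansion (\ref{star}) of a simple, non-degenerate integer LRS $\uu$ of order $k \leq 9$, and to proceed by case analysis on the configuration of dominant characteristic roots (those of maximum modulus $\rho$). I would first divide through by $\rho^n$ to obtain $u_n/\rho^n = D(n) + E(n)$, where $D(n)$ collects the contributions of the dominant roots (each now of modulus $1$) and the tail $E(n)$ satisfies $|E(n)| \leq C\alpha^n$ for effective constants $C > 0$ and $\alpha \in (0,1)$. The sign of $u_n$ for large $n$ is thus controlled by $D(n)$, which has the form $a + \sum_j\bigl(c_j e^{in\theta_j} + \overline{c_j} e^{-in\theta_j}\bigr)$ (with the constant $a$ absent if no dominant root is real), where the $\theta_j$ are the arguments of the dominant non-real roots.

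Next, I would invoke Kronecker's theorem to identify the closure $T \subseteq \mathbb{T}^m$ of the orbit $\{(n\theta_1,\ldots,n\theta_m) \bmod 2\pi : n \in \mathbb{N}\}$ as a subtorus cut out by the $\mathbb{Z}$-linear relations among the $\theta_j$, which are effectively computable via Masser's theorem. Setting $\mu := \inf_n D(n)$, this quantity coincides with the minimum over $T$ of the continuous extension of $D$, a semi-algebraic number. The argument splits on the sign of $\mu$: if $\mu > 0$, positivity holds for all $n$ beyond a computable threshold, reducing Positivity to inspecting a finite prefix; if $\mu < 0$, a quantitative form of Kronecker's theorem supplies a short witness $n$ with $u_n < 0$. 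The delicate case is $\mu = 0$, where positivity hinges on quantifying how closely $D(n)$ can approach $0$; here Baker's theorem on linear forms in logarithms provides a lower bound on $|D(n)|$ that is polynomial in $1/n$ whenever $D(n) \neq 0$, eventually dominating $|E(n)|$.

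Applying Baker in this regime requires that the algebraic subvarieties of $T$ on which the continuous extension of $D$ vanishes are in fact zero-dimensional, which is precisely the role of Lemmas~\ref{lem-one-constraint}, \ref{lem-m-1-constraints}, and \ref{zero-dim-lemma}. I expect the main obstacle to be the subcase of seven dominant characteristic roots, treated in Sec.~\ref{7-dom-roots}: with $k \leq 9$ and seven roots on the dominant circle, the torus $T$ can be of dimension up to three, the minimum-attainment geometry becomes intricate, and the zero-dimensionality lemmas must be deployed with care in order to put Baker's theorem into usable form.

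Finally, to obtain the complexity bound $\mathrm{coNP}^{\mathrm{PP}^{\mathrm{PP}^{\mathrm{PP}}}}$, I would encode each sub-decision---identifying the dominant roots, computing the relations defining $T$, determining the sign of $\mu$, producing a small counterexample index, and verifying the Baker-based estimates---as a sentence in the first-order theory of the reals, and invoke Renegar's fine-grained quantifier-elimination complexity. Algebraic-number manipulations are handled via the standard reduction to PosSLP, which is known to sit in the counting hierarchy. The coNP outer layer guesses a potential counterexample index in binary, while the successive PP oracles handle the nested numerical-algebraic tests, yielding the four-level bound.
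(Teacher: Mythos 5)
Your proposal follows essentially the same route as the paper: normalize by the dominant modulus, reduce via Prop.~\ref{prune} to the case of a real positive dominant root, use Kronecker and Masser to describe the closure torus $T$ and compute $\mu$ through the first-order theory of the reals (Renegar), and in the critical $\mu = 0$ case invoke the zero-dimensionality lemmas plus Baker's theorem, with the complexity bound coming from a coNP guess of a short counterexample index checked in PosSLP $\subseteq \mathrm{P}^{\mathrm{PP}^{\mathrm{PP}^{\mathrm{PP}}}}$. One step worth sharpening: Baker's theorem does not directly lower-bound $|D(n)|$ (that quantity is not a linear form in logarithms); it lower-bounds $|\lambda_1^n - \zeta|$ for the finitely many minimizers $\zeta$, and it is a \L{}ojasiewicz-type inequality (Prop.~2.6.2 of~\cite{BCR98}, made effective via Renegar) that converts distance from the minimizer set into the inverse-polynomial lower bound on $D(n) + a$ needed to dominate $|E(n)|$.
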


Note that deciding whether the characteristic roots are simple can
easily be done in polynomial time; cf.\ App.~\ref{tools}.

Observe also that Thm.~\ref{pos-theorem} immediately carries over to
rational LRS\@. To see this, consider a rational LRS $\uu$ obeying the
recurrence relation~(\ref{recurrence}). Let $\ell$ be the least common
multiple of the denominators of the rational numbers $b_1, \ldots,
b_k, u_0, \ldots, u_{k-1}$, and define an integer sequence $\vv =
\langle v_n \rangle_{n=0}^{\infty}$ by setting $v_n = \ell^{n+1} u_n$
for all $n \geq 0$. It is easily seen that $\vv$ is an integer LRS of
the same order as $\uu$, and that for all $n$, $v_n \geq 0$ iff $u_n
\geq 0$. Moreover, $\vv$ is simple iff $\uu$ is simple.

\

\noindent
\textbf{High-Level Synopsis (I)\@.}
At a high level, the algorithm upon which Thm.~\ref{pos-theorem} rests
proceeds as follows. Given an LRS $\uu$, we first decide whether or
not $\uu$ is ultimately positive\footnote{A sequence is ultimately
  positive if all but finitely many of its terms are positive.} by
studying its exponential polynomial solution---further details on this
task are provided shortly. As we prove in this paper, whenever $\uu$
is an ultimately positive simple LRS of order~$9$ or less, there is a
polynomial-time computable threshold $N$ of at most exponential
magnitude such that all terms of $\uu$ beyond $N$ are
positive. Clearly $\uu$ cannot be positive unless it is ultimately
positive. Now in order to assert that an ultimately positive LRS $\uu$
is \emph{not} positive, we use a \emph{guess-and-check} procedure:
find $n \leq N$ such that $u_n < 0$. By writing $u_n = \vec{v}^T M^n
\vec{w}$, for some square integer matrix $M$ and vectors $\vec{v}$ and
$\vec{w}$ (cf.~Sec.~\ref{sec-LRS}), we can decide whether $u_n < 0$ in
$\mathrm{PosSLP}$\footnote{Recall that $\mathrm{PosSLP}$ is the
  problem of determining whether an arithmetic circuit, with addition,
  multiplication, and subtraction gates, evaluates to a positive
  integer.} via iterative squaring, which yields an
$\mathrm{NP}^{\mathrm{PosSLP}}$ procedure for non-Positivity. Thanks
to the work of Allender \emph{et al.}~\cite{ABK09}, which asserts that
$\mathrm{PosSLP} \subseteq
\mathrm{P}^{\mathrm{PP}^{\mathrm{PP}^{\mathrm{PP}}}}$, we obtain the
required $\mathrm{coNP}^{\mathrm{PP}^{\mathrm{PP}^{\mathrm{PP}}}}$
algorithm for deciding Positivity.

\

\noindent
The following is an old result concerning LRS; proofs can be found
in~\cite[Thm.~7.1.1]{GL91} and \cite[Thm.~2]{BG07}. It also follows
easily and directly from either Pringsheim's theorem or
from~\cite[Lem.~4]{Bra06}. It plays an important role in our approach
by enabling us to significantly cut down on the number of subcases
that must be considered, avoiding the sort of quagmire alluded to
in~\cite{NS90}.

\begin{proposition}
\label{prune}
Let $\langle u_n \rangle_{n=0}^{\infty}$ be an LRS with no real
positive dominant characteristic root. Then there are infinitely many
$n$ such that $u_n < 0$ and infinitely many $n$ such that $u_n > 0$.
\end{proposition}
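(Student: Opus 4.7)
My plan is to use Pringsheim's theorem on the ordinary generating function of the sequence. Write $f(z) = \sum_{n=0}^{\infty} u_n z^n$. Since $\uu$ satisfies a linear recurrence, $f$ is a rational function whose poles, in the exponential polynomial representation of $\uu$, lie at the reciprocals of the characteristic roots that appear with nonzero coefficient. Let $\rho$ denote the maximum modulus of such roots, so that the radius of convergence of $f$ equals $1/\rho$ and at least one singularity of $f$ sits on the circle $|z|=1/\rho$.

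First I would show that $\uu$ cannot be eventually non-negative. Suppose for contradiction that $u_n \geq 0$ for all $n \geq N$. Then the tail
\[
g(z) \;=\; \sum_{n \geq N} u_n z^n \;=\; f(z) \;-\; \sum_{n < N} u_n z^n
\]
has non-negative coefficients, and differs from $f$ only by a polynomial, so $g$ has the same singularities and the same radius of convergence $1/\rho$ as $f$. Pringsheim's theorem then forces $g$ to have a singularity at $z = 1/\rho$, and hence so does $f$. That means $\rho$ itself is a characteristic root of $\uu$ appearing with nonzero coefficient in the exponential polynomial form~(\ref{star}); in particular $\rho$ is a real positive dominant root, contradicting the hypothesis.

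Second I would apply the very same argument to the sequence $\langle -u_n \rangle_{n=0}^{\infty}$, which is itself an LRS with exactly the same characteristic roots (hence still no real positive dominant root), to conclude that $\uu$ cannot be eventually non-positive either. Taken together, these two facts yield infinitely many $n$ with $u_n < 0$ and infinitely many $n$ with $u_n > 0$, as required.

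I do not expect a serious obstacle here: the only point that needs mild care is to ensure that the radius of convergence of $f$ is really $1/\rho$ and not larger, but this is immediate once one works with the minimal-order recurrence, or equivalently with those characteristic roots that appear with nonzero coefficient in~(\ref{star})---which is exactly what determines $\rho$ in the statement. The alternative route via \cite[Lem.~4]{Bra06} would replace Pringsheim's theorem by an elementary Abelian-style argument on the dominant exponentials, but the Pringsheim approach is the shortest and most transparent.
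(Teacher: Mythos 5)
The paper gives no proof of Prop.~\ref{prune}; it cites \cite[Thm.~7.1.1]{GL91} and \cite[Thm.~2]{BG07}, and simply remarks that the result ``follows easily and directly from either Pringsheim's theorem or from~\cite[Lem.~4]{Bra06}.'' Your proposal correctly carries out precisely the Pringsheim route the authors flag, with the right care at the two delicate points: applying Pringsheim to a tail of the generating function (so that eventual, not everywhere, non-negativity is refuted) and pinning $\rho$ to those characteristic roots that appear with nonzero coefficient in~(\ref{star}), which is what ensures the radius of convergence is exactly $1/\rho$.
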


By Prop.~\ref{prune}, it suffices to restrict our attention to
LRS whose dominant characteristic roots include one real positive
value. Given an integer LRS $\uu$, note that determining whether the
latter holds is easily done in time polynomial in $||\uu||$
(cf.\ App.~\ref{tools}).

Thus let $\uu$ be a non-degenerate simple integer LRS of order $k \leq
9$ having a real positive dominant characteristic root $\rho >
0$. Note that $\uu$ cannot have a real negative dominant
characteristic root (which would be $-\rho$), since otherwise the
quotient $-\rho/\rho = -1$ would be a root of unity, contradicting
non-degeneracy.  Let us 
write the characteristic roots as $\{\rho, \gamma_1,
\overline{\gamma_1}, \ldots, \gamma_m, \overline{\gamma_m}\} \cup
\{\gamma_{m+1}, \gamma_{m+2}, \ldots, \gamma_{\ell}\}$, where we
assume that the roots in the first set all have common modulus $\rho$,
whereas the roots in the second set all have modulus strictly smaller
than $\rho$.

Let $\lambda_i = \gamma_i/\rho$ for $1 \leq i \leq \ell$. We can
then write
\begin{equation}
\label{LRSrep1}
\frac{u_n}{\rho^n} = a + 
          \sum_{j=1}^m \left(c_j \lambda_j^n + 
          \overline{c_j} \overline{\lambda_j}^n\right)+ r(n) \, ,
\end{equation}
for some real algebraic constant $a$ and complex algebraic
constants $c_1, \ldots, c_m$, where $r(n)$ is a term tending to zero
exponentially fast.

Note that none of $\lambda_1, \ldots, \lambda_m$, all of which have
modulus $1$, can be a root of unity, as each $\lambda_i$ is a quotient
of characteristic roots and $\uu$ is assumed to be
non-degenerate. Likewise, for $i \neq j$, $\lambda_i/\lambda_j$ and 
$\overline{\lambda_i}/\lambda_j$ cannot be roots of unity. 

For $i \in \{1, \ldots, \ell\}$, observe also that as each $\lambda_i$
is a quotient of two roots of the same polynomial of degree $k$, it
has degree at most $k(k-1)$. In fact, it is easily seen that
$||\lambda_i|| = ||\uu||^{\mathcal{O}(1)}$, $||a|| =
||\uu||^{\mathcal{O}(1)}$, and $||c_i|| = ||\uu||^{\mathcal{O}(1)}$
(cf.\ App.~\ref{tools}).

Finally, we place bounds on the rate of convergence of $r(n)$.  We
have 
\[ 
r(n) = c_{m+1} \lambda_{m+1}^n + \ldots +
c_{\ell}\lambda_{\ell}^n \, .
\] 
Combining our estimates on the height and
degree of each $\lambda_i$ together with the root-separation bound
given by Eq.~(\ref{root-sep-bound}) in App.~\ref{tools}, we
get $\frac{1}{1-|\lambda_i|} =
2^{||\uu||^{\mathcal{O}(1)}}$, for $m+1 \leq i \leq \ell$. Thanks also
to the bounds on the height and degree of the constants $c_i$, it
follows that we can find $\varepsilon \in (0,1)$ and $N \in
\mathbb{N}$ such that:
\begin{align}
\label{eq1}
& 1/\varepsilon = 2^{||\uu||^{\mathcal{O}(1)}} \\
& N = 2^{||\uu||^{\mathcal{O}(1)}}  \\
& \mbox{For all } n > N,\ |r(n)| < (1-\varepsilon)^n \, .
\label{eq3}
\end{align}
We can compute such $\varepsilon$ and $N$ in time polynomial in
$||\uu||$, since all relevant calculations on algebraic numbers
only require polynomial time (cf.\ App.~\ref{tools}).

We now seek to answer positivity and ultimate positivity questions for
the LRS $\uu = \langle u_n \rangle_{n=0}^{\infty}$ by studying the
same for $\langle u_n/\rho^n \rangle_{n=0}^{\infty}$.

In what follows, we assume that $\uu$ is as above, i.e.,
$\uu$ is a non-degenerate simple integer LRS having a real
positive dominant characteristic root $\rho > 0$.

\

\noindent
\textbf{High-Level Synopsis (II)\@.}  Before launching into technical
details, let us provide a high-level overview of our proof strategy
for deciding whether $\uu$ is ultimately positive, and when that is
the case, for computing an index threshold $N$ beyond which all of its
terms are positive.  Let us rewrite Eq.~(\ref{LRSrep1}) as
\begin{equation}
\label{eq-explanation}
\frac{u_n}{\rho^n} = a + h(\lambda_1^n, \ldots, \lambda_m^n) 
                              + r(n) \, ,
\end{equation}
where $h:\mathbb{C}^m \rightarrow \mathbb{R}$ is a continuous
function. In general, there will be integer multiplicative
relationships among the $\lambda_1, \ldots, \lambda_m$, forming a free
abelian group $L$ for which we can compute a basis thanks to
Thm.~\ref{Ge} in App.~\ref{tools}. These multiplicative
relationships define a torus $T \subseteq \mathbb{C}^m$ on which the
joint iterates $\{(\lambda_1^n, \ldots, \lambda_m^n) : n \in
\mathbb{N} \}$ are dense, as per Kronecker's theorem (in the form of
Cor.~\ref{density} in App.~\ref{tools}).

Now the critical case arises when
$a + \min h\mbox{$\restriction$}_T = 0$,
where $h\mbox{$\restriction$}_T$ denotes the function $h$ restricted
to the torus $T$. Provided that $h\mbox{$\restriction$}_T$ achieves
its minimum $-a$ at only finitely many points, we can use Baker's
theorem (in the form of Cor.~\ref{Baker-cor} in
App.~\ref{tools}) to bound the iterates $(\lambda_1^n, \ldots,
\lambda_m^n)$ away from these points by an inverse polynomial in
$n$. By combining Renegar's results (Thm.~\ref{renegar} in
App.~\ref{tools}) with techniques from real algebraic geometry, we
then argue that $h(\lambda_1^n, \ldots, \lambda_m^n)$ is itself
eventually bounded away from the minimum $-a$ by a (different) inverse
polynomial in $n$, and since $r(n)$ decays to zero exponentially fast,
we are able to conclude that $u_n/\rho^n$ is ultimately positive, and
can compute a threshold $N$ after which all terms $u_n$ (for $n > N$)
are positive.

Note in the above that a key component is the requirement that
$h\mbox{$\restriction$}_T$ achieve its minimum at finitely many
points. Lemmas~\ref{lem-one-constraint}--\ref{zero-dim-lemma}
in App.~\ref{app-lemmas} show that this is the case provided that
$L$, the free abelian group of multiplicative relationships among the
$\lambda_1, \ldots, \lambda_m$, has rank $0$, $1$, $m-1$, or
$m$. In fact, simple counterexamples can be manufactured in the other
instances, which seems to preclude the use of Baker's theorem. Since
non-real characteristic roots always arise in conjugate pairs, the
earliest appearance of this vexing state of affairs is at order $10$:
one real dominant root, $m=4$ pairs of complex dominant roots, one
non-dominant root ensuring that the term $r(n)$ is not identically
$0$, and a free abelian group $L$ of rank $2$. The difficulty encountered
there is highly reminiscent of (if technically different from) that of
the critical unresolved case for the Skolem Problem at order~$5$, as
described in~\cite{OW12}.

\

\noindent
We now proceed with the formalisation of the above. Recall that $\uu$
is assumed to be a non-degenerate simple LRS of order at most $9$,
with a real positive dominant characteristic root $\rho > 0$ and
complex dominant roots $\gamma_1, \overline{\gamma_1}, \ldots,
\gamma_m, \overline{\gamma_m} \in \mathbb{C} \setminus \mathbb{R}$. We
write $\lambda_j = \gamma_j/\rho$ for $1 \leq j \leq m$.

Since the number of dominant roots is odd and at most $9$, we divide
our analysis into two cases, there being exactly $9$ dominant roots
(Sec.~\ref{9-dom-roots}), and there being $7$ or fewer dominant
roots (Sec.~\ref{7-dom-roots}). Our starting point is
Eq.~(\ref{LRSrep1}).

Let $L = \{(v_1, \ldots, v_m) \in \mathbb{Z}^m : \lambda_1^{v_1}
\ldots \lambda_m^{v_m} = 1\}$ have rank $p$ (as a free abelian group),
and let $\{\vec{\ell_1}, \ldots, \vec{\ell_p}\}$ be a basis for $L$.
Write $\vec{\ell_q} = (\ell_{q,1}, \ldots, \ell_{q,m})$ for $1 \leq q
\leq p$. Recall from Thm.~\ref{Ge} in App.~\ref{tools} that
such a basis may be computed in polynomial time, and moreover that
each $\ell_{q,j}$ may be assumed to have magnitude polynomial in
$||\uu||$.

\subsection{Nine Dominant Roots}
\label{9-dom-roots}
If $\uu$ has 9 dominant roots, then $m =4$ and $r(n)$ is identically
$0$ in Eq.~(\ref{LRSrep1}); the latter greatly simplifies our
analysis.

Write $\mathbb{T} = \{z \in \mathbb{C} : |z| = 1\}$ for the unit
circle in the complex plane, and let
\[
T = \{(z_1, \ldots, z_4) \in \mathbb{T}^4
: \mbox{for each $q \in \{1, \ldots, p\}$, } 
z_1^{\ell_{q,1}} \ldots z_4^{\ell_{q,4}} = 1 \} \, .
\]

Define $h : T \rightarrow \mathbb{R}$ by $h(z_1, \ldots, z_4)
= \sum_{j=1}^4 (c_j z_j + \overline{c_j} \overline{z_j})$, so that for
all $n$, 
\[
\frac{u_n}{\rho^n} = a + h(\lambda_1^n, \ldots, \lambda_4^n) \, .
\] 
By Cor.~\ref{density} in App.~\ref{tools}, the set
$\{(\lambda_1^n, \ldots, \lambda_4^n) : n \in \mathbb{N}\}$ is a dense
subset of $T$. Since $h$ is continuous, we then have that $\inf
\{u_n/\rho^n : n \in \mathbb{N}\} = a + \min
h\mbox{$\restriction$}_T$. It follows that $\uu$ is ultimately
positive iff $\uu$ is positive iff $\min h\mbox{$\restriction$}_T \geq
-a$ iff
\begin{equation}
\label{min-eqn}
 \forall (z_1, z_2, z_3, z_4) \in T, \, h(z_1, z_2, z_3, z_4) \geq -a \, .
\end{equation}

We now show how to rewrite Assertion~(\ref{min-eqn}) as a sentence in
the first-order theory of the reals, i.e., involving only real-valued
variables and first-order quantifiers, Boolean connectives, and
integer constants together with the arithmetic operations of addition,
subtraction, multiplication, and division.\footnote{In
  App.~\ref{tools}, we do not have division as an allowable
  operation in the first-order theory of the reals; however instances
  of division can always be removed in linear time at the cost of
  introducing a linear number of existentially quantified fresh
  variables.} The idea is to separately represent the real and
imaginary parts of each complex quantity appearing in
Assertion~(\ref{min-eqn}), and combine them using real arithmetic so
as to mimic the effect of complex arithmetic operations.

To this end, we use two real variables $x_j$ and $y_j$ to represent
each of the $z_j$: intuitively, $z_j = x_j + i y_j$. Since the real
constant $a$ is algebraic, there is a formula $\sigma_a(x)$ which is
true over the reals precisely for $x = a$. Likewise, the real and
imaginary parts $\mathrm{Re}(c_j)$ and $\mathrm{Im}(c_j)$ of the
complex algebraic constants $c_j$ are themselves real algebraic, and
can be represented as formulas in the first-order theory of the
reals. All such formulas can readily be shown to have size polynomial
in $||u||$.

Terms of the form $z_j^{\ell_{q,j}}$ are simply expanded: for example,
if $\ell_{q,j}$ is positive, then $z_j^{\ell_{q,j}} = (x_j + i
y_j)^{\ell_{q,j}} = A_{q,j}(x_j) + i B_{q,j}(y_j)$, where $A_{q,j}$
and $B_{q,j}$ are polynomials with integer coefficients. Note that
since the magnitude of $\ell_{q,j}$ is polynomial in $||\uu||$, so are
$||A_{q,j}||$ and $||B_{q,j}||$. The case in which $\ell_{q,j}$ is
negative is handled similarly, with the additional use of a division
operation.

Combining everything, we obtain a sentence $\tau$ of the first-order
theory of the reals with division which is true iff
Assertion~(\ref{min-eqn}) holds. $\tau$ makes use of at most $17$ real
variables: two for each of $z_1, \ldots, z_4$, one for $a$, and one
for each of $\mathrm{Re}(c_1), \mathrm{Im}(c_1),$ $\ldots,
\mathrm{Re}(c_4), \mathrm{Im}(c_4)$. In removing divisions from
$\tau$, the number of variables potentially swells to $29$. Finally,
the size of $\tau$ is polynomial in $||\uu||$. We can therefore invoke
Thm.~\ref{renegar} in App.~\ref{tools} to conclude that
Assertion~(\ref{min-eqn})---and therefore the positivity of
$\uu$---can be decided in time polynomial in $||\uu||$.

\subsection{Seven or Fewer Dominant Roots} 
\label{7-dom-roots}
We now turn to the main case, i.e., the situation in which $\uu$ has
$7$ dominant roots, so that $m = 3$ in Eq.~(\ref{LRSrep1}). The
cases of $1$, $3$, and $5$ dominant roots are very similar---if
slightly simpler---and are therefore omitted.

As before, we let $\mathbb{T} = \{ z \in \mathbb{C} : |z|=1\}$, and 
write
\[
T = \{(z_1, z_2, z_3) \in \mathbb{T}^3
: \mbox{for each $q \in \{1, \ldots, p\}$, }
z_1^{\ell_{q,1}} z_2^{\ell_{q,2}} z_3^{\ell_{q,3}} = 1 \} \, .
\]

Define $h : T \rightarrow \mathbb{R}$ by $h(z_1, z_2, z_3) =
\sum_{j=1}^3 (c_j z_j + \overline{c_j} \overline{z_j})$, so that for
all $n$, 
\begin{equation}
\label{eq-seven}
\frac{u^n}{\rho^n} = a + h(\lambda_1^n, \lambda_2^n, \lambda_3^n) +
r(n) \, .
\end{equation} 
By Cor.~\ref{density} in App.~\ref{tools}, the set
$\{(\lambda_1^n, \lambda_2^n, \lambda_3^n) : n \in \mathbb{N}\}$ is a
dense subset of $T$. Since $h$ is continuous, we have $\inf
\{h(\lambda_1^n, \lambda_2^n, \lambda_3^n) : n \in \mathbb{N}\} = \min
h\mbox{$\restriction$}_T = \mu$, for some $\mu \in \mathbb{R}$.

We can represent $\mu$ via the following formula $\tau(y)$:
\[
\exists(\zeta_1,\zeta_2,\zeta_3) \in T : (h(\zeta_1,\zeta_2,\zeta_3)
= y \wedge
\forall(z_1,z_2,z_3) \in T,\, y \leq h(z_1,z_2,z_3)) \, .
\]
Similarly to the translation carried out in Sec.~\ref{9-dom-roots},
we can construct an equivalent formula $\tau'(y)$ in the first-order
theory of the reals, over a bounded number of real variables, with
$||\tau'(y)|| = ||\uu||^{\mathcal{O}(1)}$. According to
Thm.~\ref{renegar} in App.~\ref{tools}, we can then compute in
polynomial time an equivalent quantifier-free formula
\[ \chi(y) = \bigvee_{i=1}^I 
\bigwedge_{j=1}^{J_i} h_{i,j}(y) \sim_{i,j} 0 \, .
\]

Recall that each $\sim_{i,j}$ is either $>$ or $=$. Now $\chi(y)$ must
have a satisfiable disjunct, and since the satisfying assignment to
$y$ is unique (namely $y = \mu$), this disjunct must comprise at least
one equality predicate. Since Thm.~\ref{renegar} guarantees that
the degree and height of each $h_{i,j}$ are bounded by
$||\uu||^{\mathcal{O}(1)}$ and $2^{||\uu||^{\mathcal{O}(1)}}$
respectively, we immediately conclude that $\mu$ is an algebraic
number and moreover that $||\mu|| = ||\uu||^{\mathcal{O}(1)}$.

Returning to Eq.~(\ref{eq-seven}), we see that if $a + \mu < 0$,
then $\uu$ is neither positive nor ultimately positive, whereas if $a
+ \mu > 0$ then $\uu$ is ultimately positive. In the latter case,
thanks to our bounds on $||\mu||$, together with the root-separation
bound given by Eq.~(\ref{root-sep-bound}) in
App.~\ref{tools}, we have $\frac{1}{a + \mu} =
2^{||\uu||^{\mathcal{O}(1)}}$. The latter, together with
Eqs.~(\ref{eq1})--(\ref{eq3}), implies an exponential upper bound
on the index of possible violations of positivity.  The actual
positivity of $\uu$ can then be decided via a coNP procedure that
invokes a PosSLP oracle as outlined earlier.

It remains to analyse the case in which $\mu = - a$. To this end, let
$\lambda_j = e^{i \theta_j}$ for $1 \leq j \leq 3$. From
Eq.~(\ref{LRSrep1}), we have:
\[
\frac{u_n}{\rho^n} =
a + \sum_{j=1}^3 2|c_j| \cos(n \theta_j + \varphi_j) + r(n) \, .
\]
In the above, $c_j = |c_j| e^{i \varphi_j}$ for $1 \leq j \leq 3$. We
make the further assumption that each $c_j$ is non-zero; note that if
this did not hold, we could simply recast our analysis in a lower
dimension.

Let $Z = \{(\zeta_1,\zeta_2,\zeta_3) \in T :
h(\zeta_1,\zeta_2,\zeta_3) = \mu \}$ be the set of points of $T$ at
which $h$ achieves its minimum $\mu$. By Lem.~\ref{zero-dim-lemma} in
App.~\ref{app-lemmas}, $Z$ is finite. We concentrate on the set
$Z_1$ of first coordinates of $Z$. Write
\begin{align*}
\tau_1(x) = & \ \exists z_1 : 
(\mathrm{Re}(z_1) = x \wedge z_1 \in Z_1) \\
\tau_2(y) = & \ \exists z_1 : 
(\mathrm{Im}(z_1) = y \wedge z_1 \in Z_1) \, .
\end{align*}
Similarly to our earlier constructions, $\tau_1(x)$ is equivalent to a
formula $\tau'_1(x)$ in the first-order theory of the reals, over a
bounded number of real variables, with $||\tau'_1(x)|| =
||\uu||^{\mathcal{O}(1)}$. Thanks to Thm.~\ref{renegar} in
App.~\ref{tools}, we then obtain an equivalent quantifier-free
formula
\[ \chi_1(x) = \bigvee_{i=1}^I 
\bigwedge_{j=1}^{J_i} h_{i,j}(x) \sim_{i,j} 0 \, .
\]

Note that since there can only be finitely many $\hat{x} \in
\mathbb{R}$ such that $\chi_1(\hat{x})$ holds, each disjunct of
$\chi_1(x)$ must comprise at least one equality predicate, or can
otherwise be entirely discarded as having no solution. 

A similar exercise can be carried out with $\tau_2(y)$, yielding
$\chi_2(y)$. The bounds on the degree and height of each $h_{i,j}$ in
$\chi_1(x)$ and $\chi_2(y)$ then enable us to conclude that any
$\zeta = \hat{x} + i \hat{y} \in Z_1$ is algebraic, and moreover
satisfies $||\zeta|| = ||\uu||^{\mathcal{O}(1)}$. In addition,
bounds on $I$ and $J_i$ guarantee that the cardinality of $Z_1$ is at
most polynomial in $||\uu||$.

Since $\lambda_1$ is not a root of unity, for each $\zeta \in Z_1$
there is at most one value of $n$ such that $\lambda_1^n =
\zeta$. Thm.~\ref{Ge} in App.~\ref{tools} then entails that
this value (if it exists) is at most $M = ||\uu||^{\mathcal{O}(1)}$,
which we can take to be uniform across all $\zeta \in Z_1$. We can
now invoke Cor.~\ref{Baker-cor} in App.~\ref{tools} to
conclude that, for $n > M$, and for all $\zeta \in Z_1$, we have
\begin{equation}
|\lambda_1^n - \zeta| > \frac{1}{n^{||\uu||^{D}}} \, ,
\end{equation}
where $D \in \mathbb{N}$ is some absolute constant.

Let $b > 0$ be minimal such that the set 
\[
\{z_1 \in \mathbb{C} : |z_1| = 1 
\mbox{ and, for all $\zeta \in Z_1$},\, |z_1 - \zeta| \geq
\frac{1}{b} \} 
\]
is non-empty. Thanks to our bounds on the cardinality of $Z_1$, we can
use the first-order theory of the reals, together with
Thm.~\ref{renegar} in App.~\ref{tools}, to conclude that $b$ is
algebraic and $||b|| = ||\uu||^{\mathcal{O}(1)}$.

Define the function $g: [b, \infty) \rightarrow \mathbb{R}$ as
  follows:
\[
g(x) = \min \{h(z_1,z_2,z_3) - \mu : (z_1,z_2,z_3) \in T 
\mbox{ and, for all $\zeta \in Z_1$},\, |z_1 - \zeta| \geq 
\smash{\frac{1}{x}} \} \, .
\]

It is clear that $g$ is continuous and $g(x) > 0$ for all $x \in
[b,\infty)$. Moreover, as before, $g$ can be rewritten as a function
  in the first-order theory of the reals over a bounded number of
  variables, with $||g|| = \uu^{\mathcal{O}(1)}$. It follows from
  Prop.~2.6.2 of~\cite{BCR98} (invoked with the function $1/g$)
  that there is a polynomial $P \in \mathbb{Z}[x]$ such that, for all
  $x \in [b,\infty)$,
\begin{equation}
g(x) \geq \frac{1}{P(x)} \, .
\end{equation} 
Moreover, an examination of the proof of~\cite[Prop.~2.6.2]{BCR98}
reveals that $P$ is obtained through a process which hinges on
quantifier elimination. Combining this with Thm.~\ref{renegar} in
App.~\ref{tools}, we are therefore able to conclude that $||P|| =
||\uu||^{\mathcal{O}(1)}$, a fact which relies among others on our
upper bounds for $||b||$.

By Eqs.~(\ref{eq1})--(\ref{eq3}), we can find $\varepsilon \in
(0,1)$ and $N = 2^{||\uu||^{\mathcal{O}(1)}}$ such that for all $n >
N$, we have $|r(n)| < (1-\varepsilon)^n$, and moreover $1/\varepsilon
= 2^{||\uu||^{\mathcal{O}(1)}}$. Moreover, by
Prop.~\ref{exp-bound} in App.~\ref{tools}, there is $N' =
2^{||\uu||^{\mathcal{O}(1)}}$ such that
\begin{equation}
\label{combo-poly}
\frac{1}{P(n^{||\uu||^D})} > (1-\varepsilon)^n
\end{equation}
for all $n \geq N'$.

Combining Eqs.~(\ref{eq-seven})--(\ref{combo-poly}), we get
\begin{align*}
\frac{u^n}{\rho^n} = & \ a + h(\lambda_1^n, \lambda_2^n, \lambda_3^n) +
r(n)\\
\geq & \ {-\mu} + h(\lambda_1^n, \lambda_2^n, \lambda_3^n) - 
   (1-\varepsilon)^n \\
\geq & \ g(n^{||\uu||^D}) - (1-\varepsilon)^n \\ 
\geq & \ \frac{1}{P(n^{||\uu||^D})} - (1-\varepsilon)^n \\ 
\geq & \ 0 \, ,
\end{align*}
provided $n > \max\{M,N,N'\}$, which establishes ultimate positivity
of $\uu$ and provides an exponential upper bound on the index of
possible violations of positivity, as required. We can then decide the
actual positivity of $\uu$ via a $\mathrm{coNP}^{\mathrm{PosSLP}}$
  procedure as detailed earlier.

This completes the proof of Thm.~\ref{pos-theorem}.

\newpage

\appendix

\section{Mathematical Tools}
\label{tools}

In this appendix we summarise the main technical tools used in this
paper.

For $p \in \mathbb{Z}[x_1, \ldots, x_m]$ a polynomial with integer
coefficients, let us denote by $||p||$ the bit length of its
representation as a list of coefficients encoded in binary. Note that
the degree of $p$ is at most $||p||$, and the height of $p$---i.e.,
the maximum of the absolute values of its coefficients---is at
most $2^{||p||}$.

We begin by recalling some basic facts about algebraic numbers and
their (efficient) manipulation. The main references
include~\cite{Coh93,BPR06,Ren92}.

A complex number $\alpha$ is \defemph{algebraic} if it is a root of a
single-variable polynomial with integer coefficients. The
\defemph{defining polynomial} of $\alpha$, denoted $p_{\alpha}$, is
the unique polynomial of least degree, and whose coefficients do not
have common factors, which vanishes at $\alpha$. The \defemph{degree}
and \defemph{height} of $\alpha$ are respectively those of
$p_{\alpha}$.

A standard representation\footnote{Note that this representation is
  not unique.} for algebraic numbers is to encode $\alpha$ as a tuple
comprising its defining polynomial together with rational
approximations of its real and imaginary parts of sufficient precision
to distinguish $\alpha$ from the other roots of $p_{\alpha}$. More
precisely, $\alpha$ can be represented by $(p_{\alpha},a, b,r) \in
\mathbb{Z}[x] \times \mathbb{Q}^3$ provided that $\alpha$ is the
unique root of $p_{\alpha}$ inside the circle in $\mathbb{C}$ of
radius $r$ centred at $a + bi $. A separation bound due to
Mignotte~\cite{Mig82} asserts that for roots $\alpha \neq \beta$ of a
polynomial $p \in \mathbb{Z}[x]$, we have
\begin{equation}
\label{root-sep-bound}
|\alpha - \beta| > \frac{\sqrt6}{d^{(d+1)/2} H^{d-1}} \, ,
\end{equation} 
where $d$ and $H$ are respectively the degree and height of $p$. Thus
if $r$ is required to be less than a quarter of the root-separation
bound, the representation is well-defined and allows for equality
checking. Given a polynomial $p \in \mathbb{Z}[x]$, it is well-known
how to compute standard representations of each of its roots in time
polynomial in $||p||$~\cite{Pan97,Coh93,BPR06}. Thus given $\alpha$ an
algebraic number for which we have (or wish to compute) a standard
representation, we write $||\alpha||$ to denote the bit length of this
representation. From now on, when referring to computations on
algebraic numbers, we always implicitly refer to their standard
representations. 

Note that Eq.~(\ref{root-sep-bound}) can be used more generally
to separate arbitrary algebraic numbers: indeed, two algebraic numbers
$\alpha$ and $\beta$ are always roots of the polynomial
$p_{\alpha}p_{\beta}$ of degree at most the sum of the degrees of
$\alpha$ and $\beta$, and of height at most the product of the heights
of $\alpha$ and $\beta$.

Given algebraic numbers $\alpha$ and $\beta$, one can compute
$\alpha+\beta$, $\alpha \beta$, $1/\alpha$ (for non-zero $\alpha$),
$\overline{\alpha}$, and $|\alpha|$, all of which are algebraic, in
time polynomial in $||\alpha|| + ||\beta||$. Likewise, it is
straightforward to check whether $\alpha = \beta$. Moreover, if
$\alpha \in \mathbb{R}$, deciding whether $\alpha > 0$ can be done in
time polynomial in $||\alpha||$. Efficient algorithms for all these
tasks can be found in~\cite{Coh93,BPR06}.

Remarkably, integer multiplicative relationships among a fixed number
of algebraic numbers can be elicited systematically in polynomial
time:

\begin{theorem}
\label{Ge}
Let $m$ be fixed, and let $\lambda_1, \ldots, \lambda_m$ be
complex algebraic numbers of modulus $1$. Consider the free abelian group
$L$ under addition given by
\[
L =  \{(v_1, \ldots, v_m) \in \mathbb{Z}^m : 
\lambda_1^{v_1} \ldots \lambda_m^{v_m} =
1\} \, .
\]

$L$ has a basis $\{\vec{\ell_1}, \ldots, \vec{\ell_p}\} \subseteq
\mathbb{Z}^m$ (with $p \leq m$), where the entries of each of the
$\vec{\ell_j}$ are all polynomially bounded in $||\lambda_1|| + \ldots
+ ||\lambda_m||$. Moreover, such a basis can be computed in time
polynomial in $||\lambda_1|| + \ldots +
||\lambda_m||$.
\end{theorem}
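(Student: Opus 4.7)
The plan is to convert the existence statement into an algorithm by combining Masser's height bound on multiplicative relations among algebraic numbers with a brute-force enumeration over a polynomially bounded box. First, I would invoke the quantitative form of Masser's theorem to obtain a bound $B = (||\lambda_1|| + \cdots + ||\lambda_m||)^{O(1)}$, where the hidden exponent is allowed to depend on the fixed quantity $m$, such that $L$ admits some $\mathbb{Z}$-basis all of whose entries have magnitude at most $B$. If any $\lambda_j$ is a root of unity, I would first detect this and compute its order (polynomial in its degree), split off the corresponding axis-parallel trivial relation, and apply Masser's bound to the remaining non-torsion generators.

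Next, since $m$ is fixed, the box $\{-B, \ldots, B\}^m$ contains only polynomially many integer points. For each candidate $\vec v = (v_1, \ldots, v_m)$ in this box, I would compute the product $\lambda_1^{v_1} \cdots \lambda_m^{v_m}$ by repeated squaring on the standard representations of the $\lambda_j$ and test equality with $1$ via the polynomial-time algebraic-number equality check recalled earlier in this appendix. Each individual power $\lambda_j^{v_j}$ has standard representation of polynomial bit length, because $|v_j| \leq B$ and $m$ is bounded; taking the product of $m$ such numbers preserves this bound. Let $S$ be the set of candidates that pass the test.

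Finally, by construction $S$ contains a Masser basis of $L$, so $\langle S \rangle_{\mathbb{Z}} = L$; extracting a basis of $L$ therefore reduces to computing the Hermite normal form of the integer matrix whose rows are the elements of $S$. This is a standard polynomial-time procedure, and the resulting basis still has polynomially bounded entries (by Hadamard's inequality the HNF entries are controlled by the minors of the input matrix, which remain polynomial for fixed $m$ and entries bounded by $B$). The main obstacle is the first stage: without Masser's a priori height bound there is no way to terminate the search, and all the genuine depth of the theorem is concentrated there. Stages two and three are essentially routine given the height bound and the polynomial-time algebraic-number toolkit summarised above.
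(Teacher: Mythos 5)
Your proposal correctly isolates Masser's height bound as the one genuinely deep ingredient; the paper itself gives no proof but simply cites Masser~\cite{Mas88} for that bound and points to Ge~\cite{Ge93} and Cai--Lipton--Zalcstein~\cite{CLZ00} for the full algorithmic statement. Those references recover a small basis via LLL-type lattice reduction applied to approximate logarithms of the $\lambda_j$, a technique that scales even when $m$ is unbounded. You substitute a cruder but valid alternative: since $m$ is fixed, the box $\{-B,\ldots,B\}^m$ contains only polynomially many lattice points, so brute-force enumeration followed by an exact algebraic equality test and a Hermite-normal-form extraction suffices, with the Hadamard argument correctly controlling the size of the output basis. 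This buys elementarity at the cost of needing the fixed-$m$ hypothesis, which the theorem grants anyway, so the trade is harmless. Two refinements are worth making explicit. First, the root-of-unity preprocessing you describe is unnecessary: Masser's bound already accommodates torsion elements, and in any case the order of a root of unity of degree $d_j$ is polynomial in $d_j$ and so automatically lies in the search box. Second, when computing $\lambda_1^{v_1}\cdots\lambda_m^{v_m}$ and testing it against $1$, the computation takes place in the compositum $\mathbb{Q}(\lambda_1,\ldots,\lambda_m)$, whose degree is bounded only by $\prod_j d_j$; this is polynomial in the input solely because $m$ is fixed, which is a second point at which that hypothesis is essential and should be flagged rather than left implicit under ``taking the product of $m$ such numbers preserves this bound.''
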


Note in the above that the bound is on the \emph{magnitude} of the
vectors $\vec{\ell_j}$ (rather than the bit length of their
representation), which follows from a deep result of
Masser~\cite{Mas88}. For a proof of Thm.~\ref{Ge}, see
also~\cite{Ge93,CLZ00}.

We now turn to the first-order theory of the reals. Let $\vec{x} =
(x_1, \ldots, x_m)$ and $\vec{y} = (y_1, \ldots, y_r)$ be tuples of
real-valued variables, and 
let $\sigma(\vec{x}, \vec{y})$ be a Boolean combination of
atomic predicates of the form $g(\vec{x}, \vec{y}) \sim 0$, where each
$g(\vec{x}, \vec{y}) \in \mathbb{Z}[\vec{x},\vec{y}]$ is a polynomial
with integer coefficients over these variables, and $\sim$ is either
$>$ or $=$.  A \defemph{formula of the first-order theory of the
  reals} is of the form
\begin{equation}
\label{for-formula}
Q_1 \vec{x_1} \ldots Q_m \vec{x_m} \, \sigma (\vec{x}, \vec{y}) \, ,
\end{equation}
where each $Q_i$ is one of the quantifiers $\exists$ or $\forall$.
Let us denote the above formula by $\tau(\vec{y})$, whose free
variables are contained in $\vec{y}$. When $\tau$ has no free
variables, we refer to it as a \defemph{sentence}.  Naturally,
$||\tau(\vec{y})||$ denotes the bit length of the syntactic
representation of the formula, and the \defemph{degree} and
\defemph{height} of $\tau(\vec{y})$ refer to the maximum degree and
height of the polynomials $g(\vec{x}, \vec{y})$ appearing in
$\tau(\vec{y})$.

Tarski~\cite{Tar51} famously showed that the first-order theory of
the reals admits \emph{quantifier elimination}: that is, given
$\tau(\vec{y})$ as above, there is a quantifier-free formula
$\chi(\vec{y})$ that is equivalent to $\tau$: for any tuple
$\vec{\hat{y}} = (\hat{y}_1, \ldots, \hat{y}_r) \in \mathbb{R}^r$ of
real numbers, $\tau(\vec{\hat{y}})$ holds iff $\chi(\vec{\hat{y}})$
holds. An immediate corollary is that the first-order theory of the
reals is decidable.

Tarski's procedure, however, has non-elementary complexity. Many
substantial improvements followed over the years, starting with
Collins's technique of cylindrical algebraic
decomposition~\cite{Col75}. For our purposes, we require bounds not
only on the computation time, but also on the degree and height of the
resulting equivalent quantifier-free formula, as well as on the number
of atomic predicates it comprises. Such bounds are available thanks to
the work of Renegar~\cite{Ren92}. In this paper, we focus exclusively
on the situation in which the number of variables is uniformly
bounded.

\begin{theorem}[Renegar]
\label{renegar}
Let $M \in \mathbb{N}$ be fixed. Let $\tau(\vec{y})$ be of the
form~(\ref{for-formula}) above. Assume that the number of (free and
bound) variables in $\tau(\vec{y})$ is bounded by $M$ (i.e., $m + r
\leq M$). Denote the degree of $\tau(\vec{y})$ by $d$ and the number
of atomic predicates in $\tau(\vec{y})$ by $n$.

Then there is a procedure which computes an equivalent
quantifier-free formula 
\[ \chi(\vec{y}) = \bigvee_{i=1}^I \bigwedge_{j=1}^{J_i}
h_{i,j}(\vec{y}) \sim_{i,j} 0 \] 
in disjunctive normal form, where each $\sim_{i,j}$ is either $>$ or $=$,
with the following properties:
\begin{enumerate}
\item Each of $I$ and $J_i$ (for $1 \leq i \leq I$) is bounded
  by $(nd)^{\mathcal{O}(1)}$;

\item The degree of $\chi(\vec{y})$ is bounded by 
$(nd)^{\mathcal{O}(1)}$;

\item The height of $\chi(\vec{y})$ is bounded by
$2^{||\tau(\vec{y})||(nd)^{\mathcal{O}(1)}}$.

\end{enumerate}

Moreover, this procedure runs in time polynomial in
$||\tau(\vec{y})||$.
\end{theorem}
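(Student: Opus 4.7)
The plan is to follow Renegar's critical point method, which reduces general quantifier elimination to a parametric version of the decision problem for the existential theory of the reals and then iterates block by block. The key technical engine is a subroutine that, given a quantifier-free formula $\phi(\vec{x},\vec{y})$ with $m$ variables $\vec{x}$ and $r$ parameters $\vec{y}$, outputs a quantifier-free formula in $\vec{y}$ equivalent to $\exists \vec{x}\, \phi(\vec{x},\vec{y})$, with explicit bounds on the number, degrees, and heights of the polynomials produced, and running in time polynomial in the description of $\phi$ when $m+r$ is bounded.

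First I would establish the non-parametric critical point method: to decide non-emptiness of the realization of a sign condition on polynomials $f_1,\ldots,f_n$ of degree $\leq d$ in $m$ variables, construct a finite sample set hitting every connected component. This is done by intersecting with a generic sphere of large radius to ensure compactness, perturbing the $f_i$ infinitesimally to ensure genericity, and then, for each choice of signs, taking critical points of the squared distance (or of a generic linear projection) on the variety where a chosen subset of the $f_i$ vanishes with prescribed signs on the rest. Bezout's theorem bounds the number of critical points at each stratum by $d^{O(m)}$, and the total sample set has size $(nd)^{O(m)}$; with $m \leq M$ fixed this is $(nd)^{O(1)}$. The infinitesimal deformations (using a parameter $\varepsilon$ and Puiseux series, or Thom's lemma-style sign determination) are eliminated by a specialisation argument.

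Second, I would lift this to the parametric setting, treating $\vec{y}$ as symbolic. The critical point equations become a polynomial system whose coefficients lie in $\mathbb{Z}[\vec{y}]$; by computing (multivariate) subresultants and parametric gcds, the existence and sign patterns of the corresponding roots are encoded by a Boolean combination of polynomial sign conditions in $\vec{y}$. Degrees of these new polynomials in $\vec{y}$ are bounded by $(nd)^{O(1)}$ via Bezout, and their heights by $2^{\|\tau\|(nd)^{O(1)}}$ via standard bounds on coefficients of resultants. This yields the single-block elimination.

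Third, I would handle the full prefix $Q_1\vec{x_1}\cdots Q_m\vec{x_m}$ by induction from the innermost block outward. An innermost $\exists$-block is eliminated directly; a $\forall$-block is handled by negating, applying the existential procedure, and negating again (which is free in DNF/CNF and only swaps $>$ with a Boolean combination of $<$ and $=$). After each elimination, $n$, $d$, and $\log H$ each grow by a factor of $(nd)^{O(1)}$. Because the total number of quantifier blocks and variables is bounded by the fixed constant $M$, iterating these bounds $M$ times yields composite bounds still of the form $(nd)^{O(1)}$ for $I$, $J_i$, and degree, and $2^{\|\tau\|(nd)^{O(1)}}$ for height, with running time polynomial in $\|\tau\|$. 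Collecting the final disjunction in DNF gives the claimed $\chi(\vec{y})$.

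The hardest part will be the parametric critical point analysis: guaranteeing that the symbolic sample set meets every connected component for every specialisation of $\vec{y}$, while keeping degrees and heights under control, requires careful handling of genericity (via infinitesimal deformations and the transfer principle for real closed fields), together with sharp complexity bounds for multivariate subresultants and parametric sign determination. The degree bounds are essentially Bezout, but the height bounds demand the most delicate bookkeeping, since each subresultant or resultant computation can in principle square the coefficient size; Renegar's contribution is precisely to show that, with the right organisation of the computation and with $m+r$ bounded, the height blow-up is controlled by the stated $2^{\|\tau\|(nd)^{O(1)}}$ bound rather than a tower.
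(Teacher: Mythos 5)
The paper does not actually prove this statement: Thm.~\ref{renegar} is imported as a black box, with the justification consisting of a single sentence pointing to Thms.~1.1 and 1.2 of Renegar's paper~\cite{Ren92}. There is therefore no in-paper argument to compare yours against line by line; the right thing to do in the context of this paper is simply to cite~\cite{Ren92}, and your attempt to reconstruct the proof is answering a much harder question than the paper poses.

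That said, your outline is a faithful sketch of how Renegar actually establishes the result: the critical point method over a bounding sphere with infinitesimal perturbations, a parametric version of the existential decision procedure driven by (sub)resultants and sign determination with coefficients in $\mathbb{Z}[\vec{y}]$, and block-by-block elimination with $\forall$ handled by double negation, all of which compose safely because the number of variables (hence of rounds) is bounded by the fixed constant $M$. Your bookkeeping of how $n$, $d$, and $\log H$ propagate through a constant number of rounds is the correct reason the final bounds remain $(nd)^{\mathcal{O}(1)}$ and $2^{||\tau(\vec{y})||(nd)^{\mathcal{O}(1)}}$. The caveat is that what you have written is a proof plan, not a proof: the steps you yourself flag as hardest --- guaranteeing that the symbolic sample points meet every connected component under every specialisation of $\vec{y}$, and controlling coefficient growth through iterated resultant computations --- constitute essentially the entire technical content of~\cite{Ren92}, which runs to well over a hundred pages. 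One smaller point you gloss over: after negating a DNF one obtains a CNF, and re-expanding to DNF is only harmless because the number of \emph{realizable} sign conditions on $(nd)^{\mathcal{O}(1)}$ polynomials in a bounded number of variables is itself $(nd)^{\mathcal{O}(1)}$; without that observation the conversion could blow up exponentially. None of this is a flaw in your understanding, but as a substitute for the citation it is incomplete by design.
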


Note in particular that, when $\tau$ is a sentence, its truth value
can be determined in polynomial time.

Thm.~\ref{renegar} follows immediately from Thm.~1.1 (for the
case in which $\tau$ is a sentence) and Thm.~1.2 of~\cite{Ren92}.

Our next result is a special case of Kronecker's famous theorem on
simultaneous Diophantine approximation, a statement and proof of
which can be found in~\cite[Chap.~7, Sec.~1.3, Prop.~7]{Bou66}.

For $x \in \mathbb{R}$, write $[x]_{2 \pi}$ to denote the distance
from $x$ to the closest integer multiple of $2 \pi$: $[x]_{2 \pi} =
\min \{ |x - 2\pi j| : j \in \mathbb{Z} \}$.

\begin{theorem}[Kronecker]
\label{kronecker}
Let $t_1, \ldots, t_m, x_1, \ldots, x_m \in [0, 2\pi)$. The following
  are equivalent:
\begin{enumerate}
\item 
For any $\varepsilon > 0$, there exists $n \in \mathbb{Z}$ such
  that, for $1 \leq j \leq m$, we have $[n t_j - x_j]_{2\pi} 
\leq \varepsilon$.

\item For every tuple $(v_1, \ldots, v_m)$ of integers such that $v_1
  t_1 + \ldots + v_m t_m \in 2\pi \mathbb{Z}$, we have
$v_1 x_1 + \ldots + v_m x_m \in 2\pi \mathbb{Z}$.
\end{enumerate}
\end{theorem}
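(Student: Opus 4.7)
The plan is to reformulate everything in the $m$-dimensional torus $\mathbb{T}^m = (\mathbb{R}/2\pi\mathbb{Z})^m$. Let $\bar{t}$ and $\bar{x}$ be the images of $(t_1, \ldots, t_m)$ and $(x_1, \ldots, x_m)$ in $\mathbb{T}^m$. Condition (1) asserts precisely that $\bar{x}$ lies in the topological closure $H$ of the cyclic subgroup $\Gamma = \{n \bar{t} : n \in \mathbb{Z}\} \subseteq \mathbb{T}^m$. Each integer tuple $v = (v_1, \ldots, v_m)$ defines a continuous character $\chi_v : \mathbb{T}^m \to \mathbb{T}$ by $\chi_v(\theta_1, \ldots, \theta_m) = v_1\theta_1 + \cdots + v_m\theta_m$. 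Condition (2) reads: every character $\chi_v$ that vanishes at $\bar{t}$ also vanishes at $\bar{x}$. The theorem thus becomes a Pontryagin-type duality between closed subgroups of $\mathbb{T}^m$ and subgroups of its character lattice $\mathbb{Z}^m$.

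For the easy direction $(1) \Rightarrow (2)$, suppose $v_1 t_1 + \cdots + v_m t_m \in 2\pi\mathbb{Z}$ and fix $\varepsilon > 0$. Choose $n$ as in (1). Then
\[
\Bigl|\sum_{j=1}^m v_j x_j - n \sum_{j=1}^m v_j t_j \Bigr|_{2\pi}
\;=\; \Bigl|\sum_{j=1}^m v_j (x_j - n t_j) \Bigr|_{2\pi}
\;\leq\; \Bigl(\sum_j |v_j|\Bigr)\varepsilon,
\]
so $\sum_j v_j x_j$ is within $(\sum_j |v_j|)\varepsilon$ of an integer multiple of $2\pi$. Since $\varepsilon$ is arbitrary and $\sum_j v_j x_j$ is a fixed real number, it lies in $2\pi\mathbb{Z}$.

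The core of the hard direction $(2) \Rightarrow (1)$ is the following structural lemma: every closed subgroup $H \subseteq \mathbb{T}^m$ coincides with the joint zero set of its character annihilator $L(H) = \{v \in \mathbb{Z}^m : \chi_v \equiv 0 \text{ on } H\}$. Granted this, apply it to $H = \overline{\Gamma}$. A character $\chi_v$ is continuous, so $\chi_v(H) = 0$ iff $\chi_v(\Gamma) = 0$ iff $\chi_v(\bar{t}) = 0$; hence $L(H)$ is exactly the integer lattice appearing in hypothesis (2). That hypothesis then says $\chi_v(\bar{x}) = 0$ for every $v \in L(H)$, i.e.\ $\bar{x} \in H$, which is condition (1).

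The main obstacle is proving the structural lemma. I would proceed by induction on $m$, using the facts that (i) the closed subgroups of $\mathbb{T}$ are $\mathbb{T}$ itself and the finite cyclic groups (so the lemma holds for $m=1$), and (ii) for $m \geq 2$, if $L(H) = \{0\}$ one must show $H = \mathbb{T}^m$, while if $L(H) \neq \{0\}$ one picks a primitive vector $v \in L(H)$, extends it to a basis of $\mathbb{Z}^m$ via the Smith normal form (a $\mathrm{GL}_m(\mathbb{Z})$ change of coordinates), and thereby reduces $H$ to a closed subgroup of $\mathbb{T}^{m-1}$ times a finite group. The residual $L(H) = \{0\}$ base case reduces, via integration of each non-trivial character against Haar measure on $H$, to Weyl's equidistribution criterion; alternatively one can invoke Pontryagin duality for compact abelian groups as a black box. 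Either route is standard but is the one genuinely non-elementary step; everything else is bookkeeping with integer lattices and continuity of characters.
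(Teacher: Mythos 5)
The paper does not prove this statement: it is quoted as a classical result with a citation to Bourbaki, and your argument --- recasting both conditions on the torus $(\mathbb{R}/2\pi\mathbb{Z})^m$, identifying condition~(1) with membership of $\bar{x}$ in the closure of the cyclic group generated by $\bar{t}$, and invoking the duality between closed subgroups of $\mathbb{T}^m$ and their annihilators in $\mathbb{Z}^m$ --- is exactly the standard proof given in that reference, and it is correct. One small slip to repair in a full write-up: the annihilator $L(H)$ need not contain a \emph{primitive} vector (e.g.\ the two-element subgroup of $\mathbb{T}$ has annihilator $2\mathbb{Z}$), so in the inductive step you should instead take a primitive $w$ with $dw \in L(H)$ for some $d \geq 1$; after the $\mathrm{GL}_m(\mathbb{Z})$ change of basis this confines $H$ to the $d$-torsion subgroup in the first coordinate, which is precisely the source of the ``times a finite group'' factor you anticipate.
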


We can strengthen Thm.~\ref{kronecker} by requiring that $n \in
\mathbb{N}$ in the first assertion. Indeed, suppose that in a given
instance, we find that $n < 0$. A straightforward pigeonhole argument
shows that there exist arbitrarily large positive integers $g$ such
that $[g t_j]_{2\pi} \leq \varepsilon$ for $1 \leq j \leq m$. It
follows that $[(g+n) t_j - x_j]_{2 \pi} \leq 2 \varepsilon$, which
establishes the claim for sufficiently large $g$ (noting that
$\varepsilon$ is arbitrary).

Let $\lambda_1, \ldots, \lambda_m$ be complex algebraic numbers of
modulus $1$. For each $j \in \{1, \ldots, m\}$, write $\lambda_j =
e^{i\theta_j}$ for some $\theta_j \in [0, 2\pi)$.  Let 
\begin{align*}
L = & \ \{(v_1,
  \ldots, v_m) \in \mathbb{Z}^m : \lambda_1^{v_1} \ldots
  \lambda_m^{v_m} = 1\} \\
  = & \  \{(v_1, \ldots, v_m) \in
  \mathbb{Z}^m : v_1 \theta_1 + \ldots + v_m \theta_m \in 2\pi 
  \mathbb{Z} \} \, .
\end{align*}
Recall from Thm.~\ref{Ge} that $L$ is a free abelian group under
addition with basis $\{\vec{\ell_1}, \ldots, \vec{\ell_p}\} \subseteq
\mathbb{Z}^m$, where $p \leq m$.

For each $j \in \{1, \ldots, p\}$, let $\vec{\ell_j} = (\ell_{j,1},
\ldots, \ell_{j,m})$.  Write
\[
R = \{ \vec{x} = (x_1, \ldots, x_m) \in [0,2\pi)^m : 
\vec{\ell_j} \cdot \vec{x} \in 2\pi \mathbb{Z} \mbox{ for } 1 \leq j \leq
p\} \, .
\] 
By Thm.~\ref{kronecker}, for an arbitrary tuple $(x_1, \ldots,
x_m) \in [0,2\pi)^m$, it is the case that, for all $\varepsilon > 0$,
  there exists $n \in \mathbb{N}$ such that, for $j \in \{1, \ldots,
  m\}$, $[n \theta_j - x_j]_{2\pi} \leq \varepsilon$ iff
  $(x_1, \ldots, x_m) \in R$. 

Write $\mathbb{T} = \{ z \in \mathbb{C} : |z|=1\}$, and 
observe that $(x_1, \ldots, x_m) \in R$ iff $(e^{i
  x_1}, \ldots, e^{i x_m}) \in T$, where
\[
T = \{(z_1, \ldots, z_m) \in \mathbb{T}^m
: 
\mbox{for each $j \in \{1, \ldots, p\}$, }
z_1^{\ell_{j,1}} \ldots z_m^{\ell_{j,m}} = 1 \} \, .
\]

Since $e^{i n \theta_j} = \lambda_j^n$,
we immediately have the following:

\begin{corollary}
\label{density}
Let $\lambda_1, \ldots, \lambda_m$ and $T$ be as above. Then
$\{(\lambda_1^n, \ldots, \lambda_m^n) : n \in \mathbb{N}\}$ is
a dense subset of $T$.
\end{corollary}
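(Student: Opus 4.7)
The plan is to prove density in two steps: first verify that every iterate $(\lambda_1^n, \ldots, \lambda_m^n)$ genuinely lies in $T$, and then show that an arbitrary target point of $T$ can be approximated to any desired precision by choosing $n \in \mathbb{N}$ appropriately. For the first step, I would fix $j \in \{1, \ldots, p\}$; since $\vec{\ell_j} \in L$ we have $\lambda_1^{\ell_{j,1}} \cdots \lambda_m^{\ell_{j,m}} = 1$, and raising to the $n$-th power yields $(\lambda_1^n)^{\ell_{j,1}} \cdots (\lambda_m^n)^{\ell_{j,m}} = 1$, so the iterate satisfies the defining relations of $T$.

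For density, I would pick an arbitrary target $(z_1, \ldots, z_m) \in T$ and an arbitrary $\varepsilon > 0$. Writing $z_j = e^{ix_j}$ with $x_j \in [0, 2\pi)$, the defining condition $z_1^{\ell_{j,1}} \cdots z_m^{\ell_{j,m}} = 1$ for each $j \in \{1, \ldots, p\}$ translates into $\vec{\ell_j} \cdot \vec{x} \in 2\pi \mathbb{Z}$; this is precisely the condition $(x_1, \ldots, x_m) \in R$, which is exactly the hypothesis needed to apply Kronecker's theorem in the form already set up in the excerpt.

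Next, by uniform continuity of $\alpha \mapsto e^{i\alpha}$ on $\mathbb{R}$, I choose $\delta > 0$ small enough that $[\alpha - \beta]_{2\pi} < \delta$ implies $|e^{i\alpha} - e^{i\beta}| < \varepsilon$. By the strengthened form of Kronecker's theorem (i.e., Thm.~\ref{kronecker} together with the pigeonhole upgrade to $n \in \mathbb{N}$ that immediately follows its statement), there exists $n \in \mathbb{N}$ such that $[n\theta_j - x_j]_{2\pi} < \delta$ for all $j \in \{1, \ldots, m\}$. Since $\lambda_j^n = e^{in\theta_j}$, this gives $|\lambda_j^n - z_j| < \varepsilon$ simultaneously for each $j$, which is exactly the density claim.

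The only genuine content is the check that the hypothesis of Kronecker's theorem holds for an arbitrary point of $T$, and this is where the duality between the multiplicative description of $T$ and the additive description of $R$ is used; everything else is routine transfer between the torus and its universal cover. Indeed, the author explicitly writes ``we immediately have the following'' before the statement, reflecting that after the preceding setup (the identification of $R$ with $T$ via the exponential map, and the $\mathbb{Z} \to \mathbb{N}$ strengthening) no further obstacle remains.
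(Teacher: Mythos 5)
Your proof is correct and follows essentially the same route as the paper: you identify points of $T$ with points of $R$ via the exponential map, invoke the strengthened ($n\in\mathbb{N}$) form of Kronecker's theorem already set up in the preceding paragraph, and transfer the resulting Diophantine approximation back to the torus by uniform continuity of $\alpha\mapsto e^{i\alpha}$. The paper leaves all of this implicit (``we immediately have the following''), so your write-up just makes explicit the details the authors considered routine, including the brief check that the orbit genuinely lies in $T$.
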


Finally, we give a version of Baker's deep theorem on linear forms in
logarithms. The particular statement we have chosen is a sharp formulation
due to Baker and W\"ustholz~\cite{BW93}.

In what follows, $\log$ refers to the principal value of the complex
logarithm function given by $\log z = \log |z| + i \arg z$, where
$-\pi < \arg z \leq \pi$.

\begin{theorem}[Baker and W\"ustholz]
\label{Baker}
Let $\alpha_1, \ldots, \alpha_m \in \mathbb{C}$ be algebraic numbers
different from $0$ or $1$, and let $b_1,\ldots,b_m \in \mathbb{Z}$ be
integers. Write
\[\Lambda = b_1 \log \alpha_1 + \ldots + b_m \log \alpha_m \, . \]
Let $A_1, \ldots, A_m, B \geq e$ be real numbers such that, for each
$j \in \{1, \ldots, m\}$, $A_j$ is an upper bound for the height of
$\alpha_j$, and $B$ is an upper bound for $|b_j|$. Let $d$ be the
degree of the extension field $\mathbb{Q}(\alpha_1, \ldots, \alpha_m)$
over $\mathbb{Q}$.

If $\Lambda \neq 0$, then 
$\log |\Lambda| > -(16md)^{2(m+2)}\log A_1
\ldots \log A_m \log B$.  
\end{theorem}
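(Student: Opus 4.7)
The plan is to follow the classical Baker–W\"ustholz transcendence method, which sharpens Baker's original 1966 proof through the use of Wüstholz's analytic subgroup theorem and the Philippon–Masser–Wüstholz zero estimate on commutative algebraic groups. The argument proceeds by contradiction: assume $\Lambda \neq 0$ but $|\Lambda|$ is smaller than the claimed bound, and derive a contradiction by producing a nontrivial $\mathbb{Q}$-linear relation among the $\log \alpha_j$ that cannot hold.

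First, I would set the stage by working in the number field $K = \mathbb{Q}(\alpha_1,\ldots,\alpha_m)$ of degree $d$, clearing denominators so that one may manipulate algebraic integers, and fixing an appropriate embedding of $K$ into $\mathbb{C}$. Without loss of generality one may assume (after swapping $\alpha_j$ with $\alpha_j^{-1}$ and permuting) that $b_m \neq 0$, and that $|\Lambda|$ is extremely small compared with the stated bound. The next step, which is the technical heart of the proof, is the construction of an \emph{auxiliary function}. Using Siegel's lemma (in the sharp form over number fields due to Bombieri–Vaaler), one produces a nonzero polynomial $P \in \mathcal{O}_K[X_0, X_1,\ldots, X_{m-1}]$ of controlled partial degrees $L_0, L_1,\ldots, L_{m-1}$ and controlled height, such that the entire function
\[
\Phi(z) \;=\; P\!\left(z,\; \alpha_1^{z},\ldots, \alpha_{m-1}^{z}\right)
\]
together with its derivatives vanishes to a prescribed high order at a large number of positive integer arguments. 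The parameters $L_j$ and the vanishing orders are chosen, as in Baker's scheme, so that Siegel's lemma is applicable and so that the subsequent extrapolation can be iterated sufficiently many times.

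The core of the proof is then the \emph{extrapolation}: one exploits the assumption that $\Lambda$ is abnormally small to replace factors of the form $\alpha_m^{b_m^{-1} z}$ by products of $\alpha_1^{z},\ldots, \alpha_{m-1}^{z}$ up to a controllable error of size roughly $|\Lambda|$. Combining this with the maximum modulus principle (Schwarz's lemma on a disc of large radius) propagates the initial vanishing of $\Phi$ to vanishing, with small numerical error, at many further integer points and to higher orders. An arithmetic argument using Liouville's inequality (a lower bound $|\beta| \geq H(\beta)^{-d}$ for nonzero algebraic $\beta$) then forces these approximate zeros to be exact zeros, provided $|\Lambda|$ is smaller than the claimed quantity. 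At this stage $\Phi$ has far more zeros than the Bezout-type bound on its total degree permits, unless $\Phi \equiv 0$ on a nontrivial algebraic subgroup of $\mathbb{G}_a \times \mathbb{G}_m^{m-1}$.

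The final step is to apply the Philippon–Masser–Wüstholz multiplicity estimate on commutative group varieties, which converts this excess vanishing into the existence of a nonzero integer vector $(n_1,\ldots,n_{m-1})$ with $\alpha_1^{n_1}\cdots\alpha_{m-1}^{n_{m-1}} = 1$. Substituting into the definition of $\Lambda$, this multiplicative relation contradicts either the nonvanishing of $\Lambda$ or our standing hypothesis that $b_m \neq 0$, closing the argument; the sharp shape of the constants $(16md)^{2(m+2)}$ is obtained by optimizing the parameters $L_j$, the vanishing orders, and the extrapolation length in Baker's scheme, as carried out in~\cite{BW93}. The main obstacle is the delicate bookkeeping in the auxiliary-function construction and extrapolation: balancing the Siegel-lemma dimension count against the analytic growth estimates and the zero estimate so that the exponents work out to exactly $2(m+2)$ and the base $16md$, which is precisely the technical contribution of W\"ustholz's refinement of Baker's theorem and is well beyond what I would attempt to reproduce here.
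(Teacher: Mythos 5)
This theorem is not proved in the paper at all: it is imported verbatim as a known deep result, with the paper explicitly attributing the statement to the sharp formulation of Baker and W\"ustholz in~\cite{BW93} and offering no argument of its own. So there is no internal proof to compare yours against; the relevant comparison is with the literature. Your outline is a broadly faithful road map of the Baker--W\"ustholz method --- auxiliary function via Siegel's lemma, extrapolation using the smallness of $\Lambda$, Liouville's inequality to convert approximate vanishing into exact vanishing, and the W\"ustholz/Philippon--Masser multiplicity estimate on commutative group varieties to reach a contradiction. At that level of description nothing you say is wrong, and the final contradiction step (a multiplicative dependence among the $\alpha_j$ allowing a reduction in $m$ or contradicting $b_m \neq 0$) is the right shape, modulo the fact that the actual argument proceeds by induction on the number of logarithms rather than by a single clean contradiction.

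That said, as a \emph{proof} your proposal has a genuine gap, and you acknowledge it yourself: none of the quantitative content is carried out. The entire force of the theorem as stated --- the explicit constant $(16md)^{2(m+2)}$ and the product $\log A_1 \cdots \log A_m \log B$ --- lives precisely in the parameter bookkeeping (choice of the degrees $L_j$, the vanishing orders, the extrapolation radius, and the constants in the zero estimate) that you explicitly decline to reproduce. A reader could not verify the stated inequality from what you have written; they could only verify that such an inequality plausibly follows from the cited machinery. Since the paper itself treats the theorem as a black box from~\cite{BW93}, the honest resolution is to present this as a citation with an expository sketch of the method, not as a proof; presented as a proof, it does not establish the statement.
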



\begin{corollary}
\label{Baker-cor}
There exists $D \in \mathbb{N}$ such that, for any algebraic numbers
$\lambda, \zeta \in \mathbb{C}$ of modulus
$1$, and for all $n \geq 2$, whenever $\lambda^n \neq \zeta$, then
\[ |\lambda^n - \zeta| > \frac{1}{n^{(||\lambda|| + ||\zeta||)^D}}
\, .
\]
\end{corollary}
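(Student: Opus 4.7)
The plan is to derive the corollary from Theorem~\ref{Baker} by rewriting $\lambda^n - \zeta$ as (a constant times) a linear form in logarithms of algebraic numbers. Write $\lambda = e^{i\theta}$ and $\zeta = e^{i\phi}$ with $\theta,\phi \in (-\pi,\pi]$, so $\log\lambda = i\theta$ and $\log\zeta = i\phi$ under the principal branch. Choose the unique integer $k$ such that $\psi := n\theta - \phi - 2\pi k \in (-\pi,\pi]$. Since $|e^{it}-1| \geq \tfrac{2}{\pi}|t|$ for $|t| \leq \pi$, we have
\[
|\lambda^n - \zeta| \;=\; |\zeta|\cdot|e^{i\psi} - 1| \;\geq\; \tfrac{2}{\pi}|\psi|,
\]
so it suffices to lower-bound $|\psi|$. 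Recalling that $\log(-1) = i\pi$, the quantity $i\psi$ can be expressed as the linear form
\[
\Lambda \;=\; n\log\lambda \;-\; \log\zeta \;-\; 2k\log(-1),
\]
whose three algebraic base arguments $\lambda,\zeta,-1$ are all distinct from $0$ and $1$ (the hypothesis $\lambda^n \neq \zeta$ guarantees $\Lambda \neq 0$; if $\lambda = 1$ or $\zeta = 1$ the statement is elementary and handled separately, and likewise the trivial cases $\lambda = \zeta^{\pm 1}$).

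Next, I would plug into Theorem~\ref{Baker} with $m = 3$. The heights $A_1$ and $A_2$ may be taken to be $\max(e, 2^{\|\lambda\|})$ and $\max(e, 2^{\|\zeta\|})$, giving $\log A_1 \leq \|\lambda\| + O(1)$ and $\log A_2 \leq \|\zeta\| + O(1)$; for $\alpha_3 = -1$ we may take $A_3 = e$, so $\log A_3 = 1$. The degree $d$ of $\mathbb{Q}(\lambda,\zeta,-1)$ is bounded by $\deg(\lambda)\deg(\zeta) \leq \|\lambda\|\cdot\|\zeta\|$. For the coefficient bound, $|\psi| \leq \pi$ forces $|2\pi k| \leq |n\theta-\phi| + \pi \leq (n+1)\pi + \pi$, so $|2k| \leq n+2 \leq 2n$ for $n \geq 2$; thus $B := 2n$ dominates all three coefficients. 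Baker--W\"ustholz then yields
\[
\log|\Lambda| \;>\; -(48d)^{10}\,\log A_1 \log A_2 \log A_3 \log B
\;\geq\; -C\,(\|\lambda\|+\|\zeta\|)^{P}\,\log n
\]
for absolute constants $C,P$ (arising by crudely combining the polynomial-in-$\|\lambda\|,\|\zeta\|$ factors).

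Exponentiating gives $|\Lambda| > n^{-C(\|\lambda\|+\|\zeta\|)^{P}}$, and combining with $|\lambda^n - \zeta| \geq \tfrac{2}{\pi}|\Lambda|$ yields the desired bound
\[
|\lambda^n - \zeta| \;>\; \frac{1}{n^{(\|\lambda\|+\|\zeta\|)^D}}
\]
once $D$ is chosen so that $(\|\lambda\|+\|\zeta\|)^D$ absorbs both the polynomial factor $C(\|\lambda\|+\|\zeta\|)^P$ and the multiplicative constant $\tfrac{\pi}{2}$ (using $n \geq 2$).

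I do not foresee a significant obstacle: the argument is essentially a bookkeeping exercise around Baker--W\"ustholz. The one subtle point is the introduction of $-1$ as a third algebraic base so as to absorb the $2\pi i$ ambiguity in the complex logarithm; missing this step would prevent the reduction of $|\lambda^n - \zeta|$ to a nonzero linear form in logarithms. The estimate $|2k| \leq 2n$ is also essential, because it is precisely this that produces the $\log n$ factor and hence the polynomial-in-$n$ lower bound promised by the statement.
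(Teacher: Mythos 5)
Your proposal is correct and follows essentially the same route as the paper's proof: both reduce $|\lambda^n-\zeta|$ to a lower bound on $|n\log\lambda - \log\zeta - 2k\log(-1)|$ (with the key observation that $-1$ must be introduced as a third algebraic base to absorb the branch ambiguity of the logarithm), then apply Baker--W\"ustholz with $B = O(n)$ to obtain the polynomial-in-$n$ bound, and finally fold all $\lambda,\zeta$-dependent constants into the exponent $D$. The minor deviations (the constant $2/\pi$ versus $1/2$ in the chord-to-arc estimate, and the extra caveat about $\lambda = \zeta^{\pm 1}$, which Baker--W\"ustholz does not actually require since the $\alpha_j$ need not be distinct) are immaterial.
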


\begin{proof}
We can clearly assume that $\lambda \neq 1$, otherwise the result
follows immediately from Eq.~(\ref{root-sep-bound}). Likewise,
the case $\zeta = 1$ is easily handled along the same lines as the
proof below, so we assume $\zeta \neq 1$.

Let $\theta = \arg \lambda$ and $\varphi = \arg \zeta$. Then for all
$n \in \mathbb{N}$, there is $j \in \mathbb{Z}$ with $|j| \leq n$ such that
\[ |\lambda^n - \zeta| > \frac{1}{2} |n \theta - \varphi - 2j\pi| 
= \frac{1}{2} |n \log \lambda  - \log \zeta - 2j \log(-1)| \, . \]

Let $H \geq e$ be an upper bound for the heights of $\lambda$ and
$\zeta$, and let $d$ be the largest of the degrees of $\lambda$ and
$\zeta$. Notice that the degree of $\mathbb{Q}(\lambda,\zeta)$ over
$\mathbb{Q}$ is at most $d^2$. Applying Thm.~\ref{Baker} to the
right-hand side of the above equation, we get
\[ |\lambda^n - \zeta| > \frac{1}{2}\exp\left(
-(48d^2)^{10} \log^2 H \log (2n+1) \right) =
\frac{1}{2(2n+1)^{(\log^2 H)(48d^2)^{10}}} \, . 
\]
for $n \geq 1$, provided $\lambda^n \neq \zeta$.

The required result follows by noting that $\log H \leq ||\lambda|| +
||\zeta||$ and $d \leq ||\lambda|| + ||\zeta||$.
\qed
\end{proof}

Finally, we record the following fact, whose straightforward proof is
left to the reader.

\begin{proposition}
\label{exp-bound}
Let $a \geq 2$ and $\varepsilon \in (0,1)$ be real numbers. Let $B \in
\mathbb{Z}[x]$ have degree at most $a^{D_1}$ and height at most
$2^{a^{D_2}}$, and assume that $1/\varepsilon \leq 2^{a^{D_3}}$, for
    some $D_1,D_2,D_3 \in \mathbb{N}$. Then there is $D_4 \in
    \mathbb{N}$ depending only on $D_1, D_2, D_3$ such that, for all
    $n \geq 2^{a^{D_4}}$, $\displaystyle{\frac{1}{B(n)} >
 (1-\varepsilon)^n}$.
\end{proposition}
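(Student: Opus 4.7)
The plan is elementary: take logarithms and compare. The inequality $1/B(n) > (1-\varepsilon)^n$ (which presupposes $B(n)>0$, a mild condition holding in all invocations of this proposition, where $B$ arises as a polynomial bounding $1/g$ for a strictly positive function $g$) is equivalent to $\ln|B(n)| < -n\ln(1-\varepsilon)$. I would bound the left-hand side above using the degree/height hypotheses, bound the right-hand side below using the hypothesis on $\varepsilon$, and then choose $D_4$ large enough to create a uniform gap.

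For the upper bound, since $B$ has degree at most $a^{D_1}$ with integer coefficients of absolute value at most $2^{a^{D_2}}$, the triangle inequality gives, for all $n \geq 1$,
\[
|B(n)| \leq (a^{D_1}+1) \cdot 2^{a^{D_2}} \cdot n^{a^{D_1}},
\]
whence $\ln|B(n)| \leq a^{D_1}\ln n + a^{D_2}\ln 2 + \ln(a^{D_1}+1)$. For the lower bound, the estimate $-\ln(1-\varepsilon) \geq \varepsilon$ valid for $\varepsilon \in (0,1)$, together with $\varepsilon \geq 2^{-a^{D_3}}$, gives $-n\ln(1-\varepsilon) \geq n \cdot 2^{-a^{D_3}}$. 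It therefore suffices to choose $D_4$ so that, for all $n \geq 2^{a^{D_4}}$,
\[
n \cdot 2^{-a^{D_3}} \;>\; a^{D_1}\ln n + a^{D_2}\ln 2 + \ln(a^{D_1}+1).
\]

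To verify that such $D_4$ exists, I would first check the inequality at the boundary $n_0 = 2^{a^{D_4}}$: the left-hand side equals $2^{a^{D_4}-a^{D_3}}$, while the right-hand side, using $\ln n_0 = a^{D_4}\ln 2$, is bounded by a fixed polynomial expression in $a^{D_1},a^{D_2},a^{D_4}$. Since $a \geq 2$, taking $D_4 = D_1 + D_2 + D_3 + C$ for a sufficiently large absolute constant $C$ makes $a^{D_4} - a^{D_3}$ exceed $\log_2$ of that polynomial, securing the boundary case. Finally, the inequality extends from $n=n_0$ to all $n \geq n_0$ by a direct monotonicity check: differentiating in $n$, the left-hand side grows at rate $2^{-a^{D_3}}$ while the right-hand side grows at rate $a^{D_1}/n$, and the former dominates as soon as $n > a^{D_1} \cdot 2^{a^{D_3}}$, which is far below $2^{a^{D_4}}$ for our choice of $D_4$. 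This monotonicity step is the only point that requires any care; the rest is a routine computation, which is presumably why the authors have left it to the reader.
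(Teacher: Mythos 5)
The paper offers no proof of this proposition, describing it merely as a ``fact, whose straightforward proof is left to the reader,'' so there is nothing to compare against; I can only assess your argument on its own terms, and it is correct. You reduce via logarithms to comparing the linear function $n\cdot 2^{-a^{D_3}}$ against $a^{D_1}\ln n + a^{D_2}\ln 2 + \ln(a^{D_1}+1)$, using the elementary bounds $|B(n)| \leq (a^{D_1}+1)\,2^{a^{D_2}}\,n^{a^{D_1}}$ and $-\ln(1-\varepsilon) \geq \varepsilon \geq 2^{-a^{D_3}}$, then establish the comparison at the boundary $n_0 = 2^{a^{D_4}}$ and propagate it forward by a derivative comparison. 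All three steps check out. In particular, with $D_4 = D_1+D_2+D_3+C$ and $a \geq 2$ one has $a^{D_4} - a^{D_3} \geq a^{D_4}/2 \geq a\cdot 2^{D_4-2}$ while the $\log_2$ of the right-hand side at $n_0$ is at most $2 + (D_1+D_2+D_4)\log_2 a \leq (D_1+D_2+D_4+1)\,a$, so it suffices that $2^{D_4-2} > D_1+D_2+D_4+1$; a small absolute $C$ (say $C=5$) secures this for all $D_1,D_2,D_3 \geq 0$, and simultaneously guarantees $n_0 = 2^{a^{D_4}} > a^{D_1}2^{a^{D_3}}$, which is exactly your monotonicity threshold.

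Your parenthetical caveat about $B(n) > 0$ is also well spotted and is worth recording: as literally stated the proposition is false (take $B(x) = -1$, which has admissible degree and height, yet $1/B(n) = -1 < (1-\varepsilon)^n$ for every $n$). The intended reading is evidently that $B$ be eventually positive, or equivalently that $1/B(n)$ be replaced by $1/|B(n)|$. You correctly observe that in the sole place the proposition is invoked (Sec.~\ref{7-dom-roots}), the polynomial $P$ arises as an upper bound for $1/g$ with $g > 0$ on $[b,\infty)$, so $P$ is positive on the relevant range and the imprecision is harmless. A careful reader might prefer to amend the statement rather than rely on the context of its application, but your reading of the authors' intent is the right one.
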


\newpage

\section{Zero-Dimensionality Lemmas}
\label{app-lemmas}

We present the key results enabling our application of Baker's theorem
to the discrete orbit $\{(\lambda_1^n,\ldots,\lambda_m^n) : n \in
\mathbb{N}\}$ in two or three complex dimensions. In the terminology
of Sec.~\ref{decidability}, Lem.~\ref{zero-dim-lemma} shows that
the function $h$ achieves its minimum over the torus $T$ at finitely
many points. To do so, it relies on Lem.~\ref{lem-one-constraint} to
handle the case in which $L$, the free abelian group of multiplicative
relationships among $\lambda_1, \ldots, \lambda_m$, has rank $1$ or
$0$, and invokes Lem.~\ref{lem-m-1-constraints} to do the same when
$L$ has rank $m-1$.

\begin{lemma}
\label{lem-one-constraint}
Let $a_1, \ldots, a_m \in \mathbb{R}$ and $\varphi_1, \ldots,
\varphi_m \in \mathbb{R}$ be two collections of real numbers, with
each of the $a_i$ non-zero, and let $\ell_1, \ldots, \ell_m \in
\mathbb{Z}$ be $m$ integers. Define $f,g : \mathbb{R}^m \rightarrow
\mathbb{R}$ by setting
\[
f(x_1, \ldots, x_m) = \sum_{i=1}^m a_i \cos(x_i + \varphi_i) \quad
\mbox{ and } \quad
g(x_1, \ldots, x_m) = \sum_{i=1}^m \ell_i x_i \, .
\]
Assume that $g(x_1, \ldots, x_m)$ is not of the form $\ell(x_i \pm
x_j)$, for some non-zero $\ell \in \mathbb{Z}$ and indices $i \neq
j$. Let $\psi \in \mathbb{R}$, and let $\mu \in \mathbb{R}$ be the
minimum achieved by the function $f$ subject to the constraint $g(x_1,
\ldots,x_m) = \psi$.

Then $f$, subject to $g(x_1, \ldots,x_m) = \psi$, achieves $\mu$ at
only finitely many points over the domain $[0,2\pi)^m$.
\end{lemma}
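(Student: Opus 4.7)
The plan is to analyse the critical points of $f$ restricted to the constraint $\{g = \psi\}$ via Lagrange multipliers, identify the ``degenerate branches'' that yield positive-dimensional critical families, and then argue by a moment computation that such degenerate branches can realise the global minimum only when $g$ has the excluded form.

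First I would translate coordinates $y_i = x_i + \varphi_i$, $\psi' = \psi + \sum_i \ell_i \varphi_i$, reducing to the case where $f(y) = \sum_i a_i \cos y_i$ and the constraint becomes $\sum_i \ell_i y_i = \psi'$. At any critical point of $f$ on the hyperplane $H = \{g = \psi'\}$, the Lagrange condition yields $\sin y_i = -\lambda \ell_i / a_i$ for some $\lambda \in \mathbb{R}$. For each sign pattern $\epsilon \in \{+,-\}^m$ and integer shift $n \in \mathbb{Z}^m$, this parametrises a real-analytic arc $\lambda \mapsto y^{(\epsilon,n)}(\lambda)$ of candidate critical points. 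Substituting back into the constraint gives a real-analytic equation in $\lambda$, which either has finitely many solutions (contributing only finitely many critical points from that branch), or holds as an identity in $\lambda$.

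A key computation is that on any ``identity branch''---a branch for which the constraint holds for all $\lambda$---differentiating $\sum_i \ell_i y_i^{(\epsilon,n)}(\lambda) \equiv \psi'$ and using the Lagrange relation shows that $f(y^{(\epsilon,n)}(\lambda))$ is itself constant in $\lambda$; hence the minimum set $\mathcal{M}$ can be positive-dimensional only if one such identity branch realises the value $\mu$. Expanding the identity $\sum_i \ell_i \, dy_i/d\lambda \equiv 0$ as a power series in $\lambda$, one obtains the infinite system of moment conditions $\sum_i \epsilon_i (\ell_i/a_i)^{2k}(\ell_i^2/a_i) = 0$ for every $k \geq 0$. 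Interpreting these as vanishing moments of the finitely-supported signed atomic measure $\sum_{i : \ell_i \neq 0}(\epsilon_i \ell_i^2 / a_i)\,\delta_{(\ell_i/a_i)^2}$ forces, for each distinct value $c$ of $(\ell_i/a_i)^2$, the partial sum $\sum_{(\ell_i/a_i)^2 = c} \epsilon_i \ell_i^2 / a_i = 0$. In particular every such equivalence class must contain at least two indices.

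The main obstacle will be showing that, under the hypothesis that $g$ is not of the form $\ell(x_i \pm x_j)$, no identity branch can actually attain the global minimum. The strategy here is a case analysis based on the equivalence-class structure of the indices under $c_i = (\ell_i/a_i)^2$: the excluded form $\ell(x_i \pm x_j)$ corresponds precisely to the unique configuration---a single class of size two, with all other coefficients $\ell_k$ vanishing---in which the sign constraints on $\epsilon$ allow all cosine terms to simultaneously reach their individual minima $-|a_i|$ while remaining consistent with the constant-term constraint fixed by $\psi'$. Any other admissible configuration (a class of size $\geq 3$, or multiple classes, or extra zero-$\ell_k$ indices contributing non-trivially) necessarily forces at least one pair of cosine terms to partially cancel on the branch, so the constant value of $f$ on that branch strictly exceeds the global unconstrained minimum $-\sum_i |a_i|$, and hence also exceeds the constrained minimum $\mu$ realised at an isolated critical point elsewhere. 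Combining this with the finite-solutions case then shows every point of $\mathcal{M}$ is an isolated critical point on $H$, so that $\mathcal{M} \cap [0,2\pi)^m$ is finite by boundedness of the domain.
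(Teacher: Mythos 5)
Your overall plan parallels the paper's --- Lagrange multipliers, analytic branches parametrised by $\lambda$, and forcing a structural cancellation over each ratio class $(\ell_i/a_i)^2 = c$ --- and the moment computation at $\lambda = 0$ is a slick alternative to the paper's iterated endpoint analysis for extracting that cancellation (the resulting condition $\sum_{i\in I_c}\epsilon_i\ell_i^2/a_i = 0$ is equivalent to the paper's $\sum_{i\in I_c}\pm a_i = 0$, since $\ell_i^2/a_i$ is a fixed nonzero multiple of $a_i$ within each class). But there are two genuine gaps. First, the dichotomy ``the constraint equation has finitely many solutions in $\lambda$, or holds as an identity'' is not a valid consequence of analyticity here: the branch $y_i(\lambda)$ is analytic only on the \emph{open} interval $(-r,r)$ with $r=\min_i|a_i|/|\ell_i|$, and the derivatives blow up at $\pm r$, so the zeros of $g(y(\lambda))-\psi'$ (or of $f(y(\lambda))-\mu$) could accumulate at the endpoints without the relevant function vanishing identically; in that case the power-series expansion at $\lambda=0$ never gets started. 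The paper handles precisely this by examining the sign of the derivative of $\tilde f$ near the endpoint, extracting the cancellation for the minimal-ratio class, analytically continuing $\tilde f$ to a larger interval, and iterating.

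Second, and more fundamentally, the concluding step is a non sequitur. You argue that, in a non-excluded configuration, the constant value $\nu$ of $f$ on an identity branch strictly exceeds the unconstrained minimum $-\sum_i|a_i|$, and infer that $\nu$ ``also exceeds the constrained minimum $\mu$.'' But $\mu\geq-\sum_i|a_i|$ by definition, so your inequality leaves open $\nu=\mu$, which is exactly what has to be excluded. To close the argument one must exhibit a point of $\{g=\psi\}$ at which $f$ is strictly below $\nu$. The paper does this explicitly: using that the critical class (or, after one iteration, $I\cup J$) has size $p\geq 3$ together with the cancellation $\sum_{i\in I}\pm a_i=0$ (so that this class contributes $0$ to $f$ at the putative minimiser $\hat x$), it fixes $p-1$ of those indices at the angle ($0$ or $\pi$) minimising their individual cosine terms, uses the remaining index --- which has $\ell\neq 0$ --- as a slack variable to restore $g=\psi$, and shows the resulting class contribution is strictly negative because $p\geq 3$. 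That explicit construction is what makes the strict inequality $f(\check x)<\mu$ available, and it is the missing ingredient in your proposal.
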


\begin{proof}
We will establish the slightly stronger statement that $f$, subject to
the constraint $g = \psi$, achieves its minimum over $\mathbb{R}^m$
only finitely often modulo $2\pi$.

Note that by performing the substitutions $x'_i = x_i + \varphi_i$
(for $1 \leq i \leq m$) and $\psi' = \psi + \sum_{i=1}^m \ell_i \varphi_i$,
and rephrasing the statement in terms of the primed variables and
constant $\psi'$, we see that we may assume without loss of generality
that each $\varphi_i = 0$.

Observe that if each $\ell_i = 0$ (corresponding to there being no
constraint), the result is immediate: $f$ is minimised when each $x_i$
is either an odd or even multiple of $\pi$, depending on the sign of
$a_i$. Without loss of generality, let us therefore assume that
$\ell_1$ is non-zero. The case of $m = 1$ is also immediate, since the
constraint then reduces the domain of the unique variable $x_1$ to a
singleton. Let us therefore assume that $m \geq 2$.

We use the method of Lagrange multipliers. Minima of $f$ subject to
the constraint $g = \psi$ must satisfy $\nabla f = \lambda \nabla g$
for some $\lambda \in \mathbb{R}$, i.e., $-a_i\sin x_i = \lambda
\ell_i$, for $1 \leq i \leq m$. Note that $\lambda$ must satisfy
$|\lambda| \leq \frac{|a_i|}{|\ell_i|}$ for all $1 \leq i \leq
m$. Observe also that each choice of $\lambda$ gives rise to only
finitely many choices of $x_1, \ldots, x_m$ (modulo $2\pi$) which
satisfy these equations.

From $-a_i\sin x_i = \lambda \ell_i$, it follows that 
$\cos^2 x_i = 1 - \frac{\lambda^2 \ell_i^2}{a_i^2}$. Taking
square roots gives us $2^m$ choices of signs, and for each choice
let us write
\[\tilde{f}(\lambda) = \sum_{i=1}^m \pm a_i 
\sqrt{1 - \lambda^2 \frac{\ell_i^2}{a_i^2}}\, .
\]

Suppose that there are infinitely many values of $(x_1,
\dots, x_m)$ (modulo $2 \pi$) such that $g(x_1, \ldots,x_m) = \psi$
and $f(x_1, \ldots, x_m) = \mu$. It then follows that, for some fixed
choice of signs, there must be infinitely many values of $\lambda$
such that $\tilde{f}(\lambda) = \mu$.

Assume without loss of generality that $\frac{|a_1|}{|\ell_1|} \leq
\frac{|a_i|}{|\ell_i|}$ for $2 \leq i \leq m$. Notice that
$\tilde{f}(\lambda)$ is analytic (equal to its Taylor power series) on
$(\frac{-|a_1|}{|\ell_1|},\frac{|a_1|}{|\ell_1|})$.  Now if the set of
$\lambda$ such that $\tilde{f}(\lambda) = \mu$ has an accumulation
point in $(\frac{-|a_1|}{|\ell_1|},\frac{|a_1|}{|\ell_1|})$, then
$\tilde{f}$ is identically equal to $\mu$ on
$[\frac{-|a_1|}{|\ell_1|},\frac{|a_1|}{|\ell_1|}]$. Thus in any case
the set of $\lambda$ such that $\tilde{f}(\lambda) = \mu$ must have an
accumulation point at $\frac{|a_1|}{|\ell_1|}$.

Observe that if $\frac{|a_1|}{|\ell_1|} < \frac{|a_i|}{|\ell_i|}$ for
all $2 \leq i \leq m$, then a contradiction is reached as $\tilde{f}$
cannot infinitely often take on the constant value $\mu$ as $\lambda$
approaches $\frac{|a_1|}{|\ell_1|}$. To see this, examine the
derivative of each term of the form $\sqrt{1 -
  \lambda^2\frac{\ell_i^2}{a_i^2}}$: it remains bounded for $i \neq
1$, but tends to $- \infty$ for $i = 1$.

Let $I$ be the set of indices $i \in \{1, \ldots, m\}$ such that
$\frac{|a_i|}{|\ell_i|} = \frac{|a_1|}{|\ell_1|}$. By the same
argument as above, for the given choice of signs in $\tilde{f}$, we
must have $\displaystyle{\sum_{i \in I}} \pm a_i = 0$, and therefore
for all $\lambda \in
[\frac{-|a_1|}{|\ell_1|},\frac{|a_1|}{|\ell_1|}]$,
\begin{equation}
\label{f-tilde}
\tilde{f}(\lambda) = \sum_{i\notin I} \pm a_i 
\sqrt{1 - \lambda^2 \frac{\ell_i^2}{a_i^2}}\, .
\end{equation}

Observe that $|I| \geq 2$. Two cases now arise, according to whether 
(i)~$|I| \geq 3$ or (ii)~$|I| = 2$. In both cases, we derive a
contradiction by showing that $f$ subject to $g=\psi$ can achieve a
value strictly lower than $\mu$.

(i)~Suppose without loss of generality that $I = \{1, 2, \ldots, p\}$,
where $p \geq 3$, and that $|a_p| \leq |a_i|$ for $1 \leq i \leq p-1$.
Pick $\hat{x}_1, \ldots, \hat{x}_m \in
\mathbb{R}$ such that $f(\hat{x}_1, \ldots, \hat{x}_m) = \mu$ and
$g(\hat{x}_1, \ldots, \hat{x}_m) = \psi$. There is some value
$\hat{\lambda} \in [\frac{-|a_1|}{|\ell_1|},\frac{|a_1|}{|\ell_1|}]$
such that $-a_i \sin \hat{x}_i = \hat{\lambda} \ell_i$, for $1 \leq i
\leq m$. Now for the given choice of signs in $\tilde{f}$,
\[ 
\sum_{i=1}^p  \pm a_i \sqrt{1 - {\hat{\lambda}}^2 \frac{\ell_i^2}{a_i^2}} = 0
\quad \mbox{ and } \quad 
\sum_{i=p+1}^m  \pm a_i \sqrt{1 - {\hat{\lambda}}^2 \frac{\ell_i^2}{a_i^2}} =
\mu \, ,
\]
or equivalently,
\begin{equation}
\label{mu-eqn}
\sum_{i=1}^p  a_i \cos \hat{x}_i = 0
\quad \mbox{ and } \quad 
\sum_{i=p+1}^m  a_i \cos \hat{x}_i = \mu \, .
\end{equation}

In order to make $f$ assume a value strictly smaller than $\mu$, 
pick $\check{x}_1, \ldots, \check{x}_{p-1}$ to be $\pi$ or $0$
depending respectively on the signs of $a_1, \ldots, a_{p-1}$, and
pick $\check{x}_p$ so that 
$g(\check{x}_1, \ldots, \check{x}_p, \hat{x}_{p+1}, \ldots,
\hat{x}_m) = \psi$ (noting that $\ell_p \neq 0$ since $p \in I$). Then 
\[ 
\sum_{i=1}^p a_i \cos \check{x}_i \leq 
 - \left(\sum_{i=1}^{p-1}|a_i| \right) + |a_p| < 0 \, ,
\]
where the strict inequality follows from the fact that 
$p \geq 3$ and $|a_p| \leq |a_i|$ for $1 \leq i \leq p-1$.

It then follows by the right-hand side of~(\ref{mu-eqn}) that
\[
f(\check{x}_1, \ldots, \check{x}_p, \hat{x}_{p+1}, \ldots, \hat{x}_m)
< \mu \, ,
\]
concluding Case~(i).

(ii)~Without loss of generality, let us have $I = \{1, 2\}$, so that
$|a_1| = |a_2|$ and $|\ell_1| = |\ell_2|$. Note that we then cannot
have $\ell_3, \ldots, \ell_m$ all zero, otherwise $g$ would be of the
form $\ell_1(x_1 \pm x_2)$, violating one of our hypotheses. It
therefore also follows that $m \geq 3$.

We can thus assume without loss of generality that $\ell_3$ is
non-zero, and furthermore that $\frac{|a_3|}{|\ell_3|} \leq
\frac{|a_i|}{|\ell_i|}$ for $4 \leq i \leq m$. From
Eq.~(\ref{f-tilde}), we see that $\tilde{f}$ can be analytically
extended to the larger domain $(\frac{-|a_3|}{|\ell_3|},
\frac{|a_3|}{|\ell_3|})$, and by a similar line of reasoning as
earlier, we can then conclude that there must be a non-empty set $J
\subseteq \{3, \ldots, m\}$ such that, for all $j \in J$,
$\frac{|a_j|}{|\ell_j|} = \frac{|a_3|}{|\ell_3|}$ and moreover
$\displaystyle{\sum_{j \in J}} \pm a_j = 0$ for the given choice of
signs in $\tilde{f}$. We can therefore write
\[
\tilde{f}(\lambda) = \sum_{i\notin I \cup J} \pm a_i 
\sqrt{1 - \lambda^2 \frac{\ell_i^2}{a_i^2}}\, .
\]

But this situation is entirely similar to Case~(i) since $|I \cup J|
\geq 3$, which concludes Case~(ii) and the proof of
Lem.~\ref{lem-one-constraint}.
\qed
\end{proof}

\begin{lemma}
\label{lem-m-1-constraints}
Let $\uu$ be a non-degenerate simple LRS, with dominant characteristic
roots $\rho \in \mathbb{R}$ and $\gamma_1, \overline{\gamma_1},
\ldots, \gamma_m, \overline{\gamma_m} \in \mathbb{C} \setminus
\mathbb{R}$. Write $\lambda_i = \gamma_i/\rho$ for $1 \leq i \leq m$,
and assume the free abelian group $L = \{(v_1, \ldots, v_m) \in
\mathbb{Z}^m : \lambda_1^{v_1} \ldots \lambda_m^{v_m} = 1\}$ has rank
$m-1$.  Let $\{\vec{\ell_1}, \ldots, \vec{\ell_{m-1}}\}$ be a basis
for $L$, and write $\vec{\ell_j} = (\ell_{j,1}, \ldots, \ell_{j,m})$
for $1 \leq j \leq m-1$. Let
\[
 M = 
\begin{pmatrix}
\ell_{1,1} & \ell_{1,2} & \ldots & \ell_{1,m-1} & \ell_{1,m} \\
\ell_{2,1} & \ell_{2,2} & \ldots & \ell_{2,m-1} & \ell_{2,m} \\
\vdots & \vdots & \ddots & \vdots & \vdots \\
\ell_{m-1,1} & \ell_{m-1,2} & \ldots & \ell_{m-1,m-1} & \ell_{m-1,m}  \\
\end{pmatrix} \, .
\]
Let $a_1, \ldots, a_m \in \mathbb{R}$ and $\varphi_1, \ldots,
\varphi_m \in \mathbb{R}$ be two collections of $m$ real numbers,
with each of the $a_i$ non-zero, and define
$f : \mathbb{R}^m \rightarrow \mathbb{R}$, by setting
\[ f(x_1, \ldots, x_m) = \sum_{i=1}^m a_i \cos(x_i + \varphi_i) \,
. \] Let $\vec{q} = (q_1, \ldots, q_{m-1}) \in \mathbb{Z}^{m-1}$ be a
column vector of $m-1$ integers, and denote by $\vec{x}$ the column
vector of variables $(x_1, \ldots, x_m)$. Let $\mu \in \mathbb{R}$ be
the minimum achieved by the function $f$ subject to the constraint $M
\vec{x} = 2 \pi \vec{q}$.

Then $f$, subject to $M \vec{x} = 2 \pi \vec{q}$, achieves $\mu$
at only finitely many points over the domain $[0,2\pi)^m$.
\end{lemma}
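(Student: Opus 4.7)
The plan is to parametrise the one-dimensional constraint set by a single real variable $t$, reducing $f$ on that set to a trigonometric polynomial $F(t)$ in one variable, and then argue that $F$ is not identically constant. Since $\vec{\ell_1},\ldots,\vec{\ell_{m-1}}$ form a basis of the rank-$(m-1)$ lattice $L$, the matrix $M$ has rank $m-1$, and $\ker M$ is one-dimensional over $\mathbb{Q}$; let $\vec{n}=(n_1,\ldots,n_m)\in\mathbb{Z}^m$ be its primitive integer generator, unique up to sign. Fixing any particular real solution $\vec{x_0}$ of $M\vec{x}=2\pi\vec{q}$, the general real solution is $\vec{x_0}+t\vec{n}$ for $t\in\mathbb{R}$, and its intersection with the bounded convex region $[0,2\pi)^m$ corresponds to a bounded interval $I\subseteq\mathbb{R}$ (possibly empty, in which case the lemma is vacuous). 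Substituting yields
\[
F(t)=\sum_{j=1}^m a_j\cos(n_j t+\tilde\varphi_j),\qquad \tilde\varphi_j:=x_{0,j}+\varphi_j.
\]
Because $F$ is real-analytic, once we know that $F$ is not identically constant, $\{t:F(t)=\mu\}$ is discrete and hence intersects the bounded interval $I$ in only finitely many points.

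The crucial claim is that the integers $|n_1|,\ldots,|n_m|$ are all nonzero and pairwise distinct. To see non-vanishing, suppose $n_j=0$. Then the matrix $M^{(j)}$ obtained from $M$ by deleting column $j$ has nontrivial kernel and is therefore singular, so there is a nontrivial integer relation among its rows; read back in the full matrix $M$, this yields $\sum_k c_k\vec{\ell_k}=b\vec{e_j}$ for some integers $c_k$, not all zero, and some integer $b$. Since the $\vec{\ell_k}$ are linearly independent, $b\neq 0$, and thus $b\vec{e_j}\in L$, forcing $b\theta_j\in 2\pi\mathbb{Z}$ and making $\lambda_j=\gamma_j/\rho$ a root of unity, contradicting non-degeneracy. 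For distinctness, suppose $n_i=\epsilon n_j$ for some $\epsilon\in\{\pm 1\}$ and some $i\neq j$. Then $\vec{e_i}-\epsilon\vec{e_j}$ is orthogonal to $\vec{n}$, hence lies in the rational row-span of $M$, namely $L\otimes\mathbb{Q}$; clearing denominators gives $b(\vec{e_i}-\epsilon\vec{e_j})\in L$ for some nonzero $b\in\mathbb{Z}$, whence $b(\theta_i-\epsilon\theta_j)\in 2\pi\mathbb{Z}$. When $\epsilon=+1$ this forces $\gamma_i/\gamma_j$ to be a root of unity; when $\epsilon=-1$, using $\overline{\lambda_j}=\lambda_j^{-1}$, it forces $\gamma_i/\overline{\gamma_j}$ to be a root of unity. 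Both cases contradict non-degeneracy, recalling that $\gamma_i$, $\gamma_j$, and $\overline{\gamma_j}$ are pairwise distinct characteristic roots of $\uu$ by simplicity.

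Granted the claim, the $2m$ frequencies $\pm n_1,\ldots,\pm n_m$ appearing in the Fourier expansion of $F$ are pairwise distinct and nonzero, and the corresponding coefficients $\tfrac{1}{2}a_j e^{\pm i\tilde\varphi_j}$ are all nonzero since each $a_j\neq 0$. Hence $F$ is non-constant, which completes the proof. The main obstacle is the claim on the $|n_j|$: one must tie the purely linear-algebraic structure of $\ker M$ to the multiplicative rigidity afforded by non-degeneracy, and carefully rule out both kinds of forbidden relations (those involving $\gamma_i/\gamma_j$ and those involving $\gamma_i/\overline{\gamma_j}$). Once the claim is in hand, the reduction to a one-variable real-analytic function and the concluding finiteness argument are routine.
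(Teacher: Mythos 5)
Your proof is correct, and it takes a genuinely different — arguably cleaner — route than the paper's, while sharing the same high-level architecture (collapse the constraint set to one dimension, then show the resulting one-variable function is non-constant and real-analytic, so its level set at $\mu$ is discrete and meets a bounded interval in only finitely many points). The paper row-reduces the augmented system $(M\,|\,\vec q)$ into a near-diagonal form $(N\,|\,\vec p)$ and parametrises the constraint line by the coordinate $x_m$, obtaining frequencies $d_i=-b_i/n_{i,i}$ (rational but not integral); it then extracts the needed arithmetic facts ($d_i\neq 0$, $d_i^2\neq d_j^2$) from the rows of $N$ via non-degeneracy, and proves non-constancy of $\tilde f(x_m)$ by showing the odd derivatives cannot all vanish, using the invertibility of a Vandermonde matrix in the $-d_i^2$. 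You instead parametrise directly along a primitive integer generator $\vec n$ of $\ker M$, so the frequencies are integers and $F$ is $2\pi$-periodic, letting you conclude non-constancy at once from the nonvanishing of the Fourier coefficients at the $2m$ distinct nonzero frequencies $\pm n_j$. Your derivation of the key arithmetic facts (all $n_j\neq 0$ and $|n_i|\neq|n_j|$) via orthogonality of $\ker M$ and the row space of $M$, together with non-degeneracy of $\uu$, is more structural than the paper's explicit row-reduction (and neatly sidesteps the paper's ``without loss of generality, no row swaps'' manoeuvre), while reaching the same conclusions. One caveat worth being aware of, although it does not affect correctness here: the paper's explicit row-reduced form is more immediately amenable to quantitative bookkeeping (heights and bit sizes), which the larger argument in Section~\ref{7-dom-roots} ultimately cares about; your more abstract kernel argument would need a small supplementary step to certify that $\vec n$ and its entries are polynomially bounded, but this follows easily from standard bounds on integer kernels of integer matrices.
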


\begin{proof}
%
By repeatedly making use of the following row operations:
\begin{enumerate}
\item Swapping two rows,
\item Multiplying any row by a non-zero integer, and
\item Adding to any row any integer linear combination of any of the
  other rows,
\end{enumerate}
we can transform the augmented matrix $(M | \vec{q})$ into an integer matrix
\[
(N | \vec{p}) = \left(
\begin{matrix}
n_{1,1} & 0        & \hdotsfor{2}    & 0      & b_1 \\
0       & n_{2,2}   &    0   & \hdotsfor{1}    & 0      & b_2 \\
\vdots  & \ddots   & \ddots & \ddots    & \vdots & \vdots \\
0       & \ldots   &    0   & n_{m-2,m-2} & 0      & b_{m-2} \\
0       & \hdotsfor{2}    &   0    & n_{m-1,m-1} & b_{m-1}  
\end{matrix}\, \right| 
\left.
\begin{matrix}
p_1 \\
p_2 \\
\vdots \\
p_{m-2} \\
p_{m-1}
\end{matrix} \right) \, .
\]

Without loss of generality (relabelling variables and constants if
necessary), we can assume that this was achieved without the need for
any row-swapping operations.

Note that the rows of $N$ remain in $L$ (though need no longer form a
basis).  Hence for each $i \in \{1, \ldots, m-1\}$, the $\lambda_1,
\ldots, \lambda_m$ satisfy the equation $\lambda_i^{n_{i,i}}
\lambda_m^{b_i} = 1$. Since $M$ has rank $m-1$, and $N$ is obtained
from $M$ by elementary row operations, no row of $N$ can be $\vec{0}$.
From this and the fact that the LRS $\uu$ is non-degenerate we may
conclude that no $n_{i,i}$ can be zero (otherwise $\lambda_m$ would be
a root of unity), and likewise no $b_i$ can be zero (otherwise
$\lambda_i$ would be a root of unity). Furthermore, we can never have
$n_{i,i} = -b_i$ (otherwise $\lambda_i/\lambda_m$ would be a root of
unity) nor can we have $n_{i,i} = b_i$ (otherwise
$\overline{\lambda_i}/\lambda_m$ would be a root of unity). In other
words, we always have $n_{i,i}^2 \neq b_i^2$. Finally, for $i \neq j$,
$\frac{b_i}{n_{i,i}} \neq \frac{b_j}{n_{j,j}}$: indeed, since
$\lambda_i^{n_{i,i}} \lambda_m^{b_i} = 1$, we have
$\lambda_i^{n_{i,i}b_j} \lambda_m^{b_ib_j} = 1$, and likewise
$\lambda_j^{n_{j,j}b_i} \lambda_m^{b_ib_j} = 1$, from which we deduce
that $\lambda_i^{n_{i,i}b_j} = \lambda_j^{n_{j,j}b_i}$. But if we had
$\frac{b_i}{n_{i,i}} = \frac{b_j}{n_{j,j}}$, it would follow that
$\lambda_i/\lambda_j$ is a root of unity. Similarly, by noting that
$\overline{\lambda_j}^{n_{j,j}} = \lambda_j^{-n_{j,j}}$ and repeating
the calculation, we deduce that $\frac{b_i}{n_{i,i}} \neq
-\frac{b_j}{n_{j,j}}$ for $i \neq j$. Combining the last two
disequalities, we have that $\frac{b_i^2}{n_{i,i}^2} \neq
\frac{b_j^2}{n_{j,j}^2}$ for $i \neq j$.

It is clear that the equations $M \vec{x} = 2 \pi \vec{q}$ and $N
\vec{x} = 2 \pi \vec{p}$ are equivalent (as constraints over the vector
of real-valued variables $\vec{x}$). From the latter, we may
write $x_i = \frac{p_i}{n_{i,i}} - \frac{b_i}{n_{i,i}} x_m$ for $1
\leq i \leq m-1$. For ease of notation, let us set
\begin{gather*}
d_i = -\frac{b_i}{n_{i,i}} \ \mbox{for $1 \leq i \leq m-1$, and }
d_m = 1 \, ;\\
\nu_i =  \frac{p_i}{n_{i,i}} + \varphi_i \ \mbox{for $1 \leq i \leq m-1$, and }
\nu_m = \varphi_m \, .
\end{gather*}
From our earlier observations, let us record that:
\begin{enumerate}
\item Each $d_i$ is non-zero, and 
\item For $1 \leq i < j \leq m$, we have $d_i^2 \neq
d_j^2$. 
\end{enumerate}
Indeed, we have already seen that the second assertion holds when $j
\leq m-1$. But since $n_{i,i}^2 \neq b_i^2$, for $1 \leq i \leq m-1$
we have that $d_i^2 \neq 1 = d_m^2$.

Substituting into $f$ yields
\[ \tilde{f}(x_m) = 
\sum_{i=1}^m a_i 
\cos( d_i x_m + \nu_i) \, ,
\] 
where $\tilde{f}$ is now unconstrained. Since any value of $x_m$ in
$[0,2\pi)$ such that $\tilde{f}(x_m) = \mu$ yields at most one point
  $\vec{x}$ in $[0, 2\pi)^m$ satisfying $M \vec{x} = 2 \pi \vec{q}$
    and such that $f(\vec{x}) = \mu$, it remains to show that
    $\tilde{f}$ can achieve $\mu$ only finitely often over $[0,
      2\pi)$.

Thus suppose, to the contrary, that $\tilde{f}$ achieves $\mu$ at
infinitely many points in $[0, 2 \pi)$. These points must accumulate,
  and since $\tilde{f}$ is analytic over $\mathbb{R}$, $\tilde{f}$
  must be identically equal to $\mu$ all over the reals. It follows
  that derivatives of all orders must vanish everywhere. Now for $j
  \geq 1$, the $(2j-1)$th derivative of $\tilde{f}$ is given by
\[f^{(2j-1)}(x_m) = \sum_{i=1}^m (-1)^j d_i^{2j-1}a_i \sin(d_i x_m +
\nu_i) \, .
\]
Writing 
\[
D = 
\begin{pmatrix}
1 & \ & 1 & \hdotsfor{3} & 1 \\
-d_1^2 & \ & -d_2^2 & \hdotsfor{3} & -d_m^2 \\
d_1^4 & \ & d_2^4 & \hdotsfor{3} & d_m^4 \\
\vdots & \ & \vdots & \ & \vdots & \ & \vdots \\
(-1)^{m-1}d_1^{2(m-1)} & \ \ \ & (-1)^{m-1}d_2^{2(m-1)} & \ \ \ & 
\hdots & \ \ \ & (-1)^{m-1}d_m^{2(m-1)} \\ 
\end{pmatrix} \, ,
\]
we therefore have that
\[
\begin{pmatrix}
f^{(1)}(x_m) \\
f^{(3)}(x_m) \\
\vdots  \\
f^{(2m-1)}(x_m)
\end{pmatrix}
=
D
\begin{pmatrix}
-d_1 a_1 \sin(d_1 x_m + \nu_1) \\
-d_2 a_2 \sin(d_2 x_m + \nu_2) \\
\vdots  \\
-d_m a_m \sin(d_m x_m + \nu_m) 
\end{pmatrix}
=
\begin{pmatrix}
0 \\
0 \\
\vdots  \\
0
\end{pmatrix} 
\]
must hold for all $x_m \in \mathbb{R}$.

But this is a contradiction since $D$ is a Vandermonde matrix which is
invertible (given that for $i \neq j$, we have $-d_i^2 \neq -d_j^2$) and
the vector
\[
\begin{pmatrix}
-d_1 a_1 \sin(d_1 x_m + \nu_1) \\
-d_2 a_2 \sin(d_2 x_m + \nu_2) \\
\vdots  \\
-d_m a_m \sin(d_m x_m + \nu_m) 
\end{pmatrix}
\]
clearly cannot be identically $\vec{0}$.
\qed
\end{proof}

\begin{lemma}
\label{zero-dim-lemma}
Following the notation of Sec.~\ref{decidability},
let $\uu$ be a non-degenerate simple LRS with a real positive dominant
characteristic root $\rho > 0$ and complex dominant roots $\gamma_1,
\overline{\gamma_1}, \gamma_2, \overline{\gamma_2}, \gamma_3,
\overline{\gamma_3} \in \mathbb{C} \setminus \mathbb{R}$. Write
$\lambda_j = \gamma_j/\rho$ for $1 \leq j \leq 3$.

Let $L = \{(v_1, v_2, v_3) \in \mathbb{Z}^3 : \lambda_1^{v_1}
\lambda_2^{v_2} \lambda_3^{v_3} = 1\}$ have rank $p$ (as a free abelian group),
and let $\{\vec{\ell_1}, \ldots, \vec{\ell_p}\}$ be a basis for $L$.
Write $\vec{\ell_q} = (\ell_{q,1}, \ell_{q,2}, \ell_{q,3})$ for $1 \leq q
\leq p$. 

Let $T = \{(z_1, z_2, z_3) \in \mathbb{T}^3
: \mbox{for each $q \in \{1, \ldots, p\}$, }
z_1^{\ell_{q,1}} z_2^{\ell_{q,2}} z_3^{\ell_{q,3}} = 1 \}$, where 
$\mathbb{T} = \{ z \in \mathbb{C} : |z|=1\}$.

Define $h: T \rightarrow \mathbb{R}$ by setting
$h(z_1, z_2, z_3) = \sum_{j=1}^3 (c_j z_j + \overline{c_j} \overline{z_j})$.

Then $h$ achieves its minimum $\mu$ at only finitely many points over $T$.
\end{lemma}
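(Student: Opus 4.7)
The plan is to reduce Lemma~\ref{zero-dim-lemma} to Lemmas~\ref{lem-one-constraint} and~\ref{lem-m-1-constraints} via a case analysis on the rank $p$ of $L$. First I would pass to real coordinates: writing $\lambda_j = e^{i\theta_j}$, $c_j = |c_j|e^{i\varphi_j}$ and $z_j = e^{ix_j}$ with $x_j \in [0,2\pi)$, the function $h$ pulls back to
\[
 f(x_1,x_2,x_3) \;=\; \sum_{j=1}^3 2|c_j|\cos(x_j + \varphi_j),
\]
with coefficients $a_j = 2|c_j| \neq 0$ by the standing assumption that every $c_j$ is non-zero. Under this parametrisation, $T$ corresponds to $R = \{\vec{x}\in[0,2\pi)^3 : \vec{\ell_q}\cdot\vec{x} \in 2\pi\mathbb{Z},\ 1\le q \le p\}$. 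Since each $\vec{\ell_q}\cdot\vec{x}$ is bounded on $[0,2\pi)^3$, the set of attainable integer right-hand sides $\vec{q}\in\mathbb{Z}^p$ with $\vec{\ell_q}\cdot\vec{x}=2\pi q_q$ is finite, and $R$ is the (finite) disjoint union of the corresponding level sets $R_{\vec{q}}$. It therefore suffices to show that $f$ attains its minimum on each $R_{\vec{q}}$ only finitely often, and then take the union over the finitely many $\vec{q}$'s achieving the global minimum $\mu$.

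I now split on $p \in \{0,1,2,3\}$. The case $p=3$ cannot occur: if $L$ had rank~$3$ it would have finite index in $\mathbb{Z}^3$, hence contain a vector $(k,0,0)$ with $k\ne 0$, forcing $\lambda_1^k = 1$ and contradicting non-degeneracy. The case $p=0$ imposes no constraint, and $f$ has its unique minimum on $[0,2\pi)^3$ at $x_j \equiv \pi - \varphi_j \pmod{2\pi}$. For $p=2 = m-1$ (with $m=3$), Lemma~\ref{lem-m-1-constraints} applies directly to each level set $R_{\vec{q}}$ and gives finitely many minimisers per $\vec{q}$.

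The only case requiring genuine verification is $p=1$, where I would apply Lemma~\ref{lem-one-constraint} with $g(\vec{x}) = \vec{\ell_1}\cdot\vec{x}$ and $\psi = 2\pi q$. The hypothesis to check is that $g$ is \emph{not} of the form $\ell(x_i \pm x_j)$ for some non-zero $\ell \in \mathbb{Z}$ and $i\ne j$. If $g = \ell(x_i - x_j)$, then $\lambda_i^\ell \lambda_j^{-\ell}=1$, so $\lambda_i/\lambda_j = \gamma_i/\gamma_j$ is a root of unity---impossible by non-degeneracy. If $g = \ell(x_i + x_j)$, then $\lambda_i \lambda_j$ is a root of unity; but
\[
 \lambda_i \lambda_j \;=\; \frac{\gamma_i\gamma_j}{\rho^2} \;=\; \frac{\gamma_i}{\bar\gamma_j},
\]
using $\gamma_j\bar\gamma_j = \rho^2$, and since $\gamma_i$ and $\bar\gamma_j$ are distinct characteristic roots (they belong to different conjugate pairs among the dominant roots), non-degeneracy again rules this out. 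With the hypothesis verified, Lemma~\ref{lem-one-constraint} delivers finitely many minimisers of $f$ on each $R_q$, completing the proof.

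The main obstacle---and really the only delicate point---is the algebraic check for $p=1$ that the primitive relation $\vec{\ell_1}$ does not degenerate into an $x_i \pm x_j$ form; this is precisely where the non-degeneracy hypothesis on $\uu$ is used, via the identification $\gamma_i\gamma_j/\rho^2 = \gamma_i/\bar{\gamma_j}$. Everything else (finiteness of relevant $\vec{q}$'s, reduction of $h$ on $T$ to $f$ on $R$, exclusion of $p=3$) is routine.
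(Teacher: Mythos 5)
Your proof is correct and follows essentially the same route as the paper: pass to real coordinates, decompose $R$ into finitely many level sets indexed by integer right-hand sides $\vec{q}$, and dispatch the ranks $p=0,1,2$ via Lemmas~\ref{lem-one-constraint} and~\ref{lem-m-1-constraints}, using non-degeneracy (through the identity $\lambda_i\lambda_j = \gamma_i/\overline{\gamma_j}$) to verify the hypothesis of Lemma~\ref{lem-one-constraint} and to exclude $p=3$. Your finite-index argument for ruling out $p=3$ is a slightly more self-contained alternative to the paper's appeal to row reduction, but this is a minor variation rather than a different approach.
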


\begin{proof}

(i)~We first consider the case in which $L$ has rank $1$, and
handle the case of rank $0$ immediately afterwards. Let $\vec{\ell_1}
= (\ell_{1,1}, \ell_{1,2}, \ell_{1,3}) \in \mathbb{Z}^3$ span $L$.
Write
\[
R = \{ (x_1, x_2, x_3) \in [0,2\pi)^3 : 
\ell_{1,1} x_1 + \ell_{1,2} x_2 + \ell_{1,3} x_3 \in 2\pi \mathbb{Z} \} \, .
\] 
Clearly, for any $(x_1, x_2, x_3) \in [0,2 \pi)^3$, we have
$(x_1, x_2, x_3) \in R$ iff 
$(e^{i x_1}, e^{i x_2}, e^{i x_3}) \in T$. Define $f : R
  \rightarrow \mathbb{R}$ by setting
\[ f(x_1, x_2, x_3) = \sum_{j = 1}^3 2|c_j|\cos(x_j + \varphi_j) \,
. 
\] 
Plainly, for all $(x_1, x_2, x_3) \in R$, we have $f(x_1,
  x_2, x_3) = h(e^{i x_1}, e^{i x_2}, e^{i x_3})$, and therefore the
  minima of $f$ over $R$ are in one-to-one correspondence with those
  of $h$ over $T$.

Define $g : \mathbb{R}^3 \rightarrow \mathbb{R}$ by setting
\[
g(x_1, x_2, x_3) = \ell_{1,1} x_1 + \ell_{1,2} x_2 + \ell_{1,3} x_3 \, .
\]
Note that $g(x_1,x_2,x_3)$ cannot be of the form $\ell(x_i-x_j)$, for
non-zero $\ell \in \mathbb{Z}$ and $i \neq j$, otherwise (by
definition of $\vec{\ell_1}$) $\lambda_i^{\ell} \lambda_j^{-\ell} = 1$,
i.e., $\lambda_i/\lambda_j$ would be a root of unity, contradicting
the non-degeneracy of $\uu$. Likewise, $g$ cannot be of the form
$\ell(x_i+x_j)$, otherwise $\lambda_i/\overline{\lambda_j}$ would be a
root of unity.

Finally, observe that for $(x_1,x_2,x_3) \in [0,2\pi)^3$, we have
  $(x_1,x_2,x_3) \in R$ iff $\ell_{1,1} x_1 + \ell_{1,2} x_2 +
  \ell_{1,3} x_3 = 2\pi q$, for some $q \in \mathbb{Z}$ with $|q| \leq
  |\ell_{1,1}| + |\ell_{1,2}| + |\ell_{1,3}|$. For each of these
  finitely many $q$, we can invoke Lem.~\ref{lem-one-constraint} with
  $f$, $g$, and $\psi = 2\pi q$, to conclude that $f$ achieves its
  minimum $\mu$ at finitely many points over $R$, and therefore that
  $h$ achieves the same minimum at finitely many points over $T$.

The case of $L$ having rank $0$, i.e., when there are no non-trivial integer
multiplicative relationships among $\lambda_1, \lambda_2, \lambda_3$,
is now a special case of the above, where we have $\ell_{1,1} =
\ell_{1,2} = \ell_{1,3} = 0$.

(ii)~We now turn to the case of $L$ having rank $2$. We have
$\vec{\ell_1} = (\ell_{1,1}, \ell_{1,2}, \ell_{1,3}) \in \mathbb{Z}^3$
and $\vec{\ell_2} = (\ell_{2,1}, \ell_{2,2}, \ell_{2,3}) \in
\mathbb{Z}^3$ spanning $L$. Let $\vec{x}$ denote the column vector
$(x_1,x_2,x_3)$, and write
\[ R = \{ (x_1, x_2, x_3) \in [0,2\pi)^3 : 
          \vec{\ell_1} \cdot \vec{x} \in 2\pi \mathbb{Z}\mbox{ and }
          \vec{\ell_2} \cdot \vec{x} \in 2\pi \mathbb{Z} \}\, .
\]
Define $f : R \rightarrow \mathbb{R}$ by setting
$\displaystyle{ f(x_1, x_2, x_3) = \sum_{j = 1}^3 2|c_j|\cos(x_j +
  \varphi_j)}$.  As before, the minima of $f$ over $R$ are in
one-to-one correspondence with those of $h$ over $T$.

For $(x_1,x_2,x_3) \in [0,2\pi)^3$, we have $\vec{\ell_1} \cdot
  \vec{x} \in 2\pi \mathbb{Z}$ and $\vec{\ell_2} \cdot \vec{x} \in
  2\pi \mathbb{Z}$ iff there exist $q_1, q_2 \in \mathbb{Z}$, with
  $|q_1| \leq |\ell_{1,1}| + |\ell_{1,2}| + |\ell_{1,3}|$ and $|q_2|
  \leq |\ell_{2,1}| + |\ell_{2,2}| + |\ell_{2,3}|$, such that
  $\vec{\ell_1} \cdot \vec{x} = 2\pi q_1$ and $\vec{\ell_2} \cdot
  \vec{x} = 2\pi q_2$. For each of these finitely many $\vec{q} =
  (q_1,q_2)$, we can invoke Lem.~\ref{lem-m-1-constraints} with $f$,
  $M =
\begin{pmatrix}
\ell_{1,1} & \ell_{1,2} & \ell_{1,3} \\
\ell_{2,1} & \ell_{2,2} & \ell_{2,3} 
\end{pmatrix}$, and $\vec{q}$, to conclude that $f$ achieves 
its minimum $\mu$ at finitely many points over $R$, and therefore that $h$
achieves the same minimum at finitely many points over $T$.

(iii)~Finally, observe that $L$ cannot have rank $3$, since this would
immediately entail that every $\lambda_j$ is a root of unity
(contradicting the non-degeneracy of $\uu$), following a row-reduction
procedure similar to the one presented in the first stages of the
proof of Lem.~\ref{lem-m-1-constraints}.
\qed
\end{proof}

\newpage

\bibliography{posbib2.bib}  

\end{document}